\documentclass[aps,pra,reprint,superscriptaddress,longbibliography,nofootinbib]{revtex4-1}
\usepackage{amsfonts}
\usepackage{amsmath}
\usepackage{amssymb}
\usepackage{amsthm}
\usepackage{mathtools}
\usepackage{enumerate}
\usepackage{ulem}
\usepackage{bm}
\usepackage{graphicx}
\usepackage{nicefrac}
\usepackage{color}
\usepackage[all]{xy}
\usepackage{hyperref}
\usepackage{bbm}
\usepackage{tikz}
\usetikzlibrary{positioning}
\usepackage{MnSymbol}
\usepackage[caption=false]{subfig}

\usepackage{soul,xcolor}
\setstcolor{purple}

\newtheorem{theorem}{Theorem}
\newtheorem*{theorem*}{Theorem}
\newtheorem{lemma}{Lemma}
\newtheorem{corollary}{Corollary}
\newtheorem{definition}{Definition}

\newcommand{\zz}{\mathbb Z}
\newcommand{\calC}{\mathcal{C}}

\newcommand{\ket}[1]{\ensuremath{\left|#1\right\rangle}}

\begin{document}
\title{Contextuality as a resource for measurement-based quantum computation beyond qubits}
\author{Markus Frembs}
\affiliation{Department of Physics, Imperial College London, London SW7 2AZ, United Kingdom}
\author{Sam Roberts}
\author{Stephen D. Bartlett}
\affiliation{Centre for Engineered Quantum Systems, School of Physics, The University of Sydney, Sydney, Australia}

\begin{abstract}
Contextuality---the obstruction to describing quantum mechanics in a classical statistical way---has been proposed as a resource that powers quantum computing.  The measurement-based model provides a concrete manifestation of contextuality as a computational resource, as follows.  If local measurements on a multi-qubit state can be used to evaluate non-linear boolean functions with only linear control processing, then this computation constitutes a proof of strong contextuality---the possible local measurement outcomes cannot all be pre-assigned.  However, this connection is restricted to the special case when the local measured systems are \textit{qubits}, which have unusual properties from the perspective of contextuality.  A single qubit cannot allow for a proof of contextuality, unlike higher-dimensional systems, and multiple qubits can allow for state-independent contextuality with only Pauli observables, again unlike higher-dimensional generalisations.  Here we identify precisely that strong non-locality is necessary in a qudit measurement-based computation that evaluates high-degree polynomial functions with only linear control.  We introduce the concept of \textit{local universality}, which places a bound on the space of output functions accessible under the constraint of single-qudit measurements. Thus, the partition of a physical system into subsystems plays a crucial role for the increase in computational power.  A prominent feature of our setting is that the enabling resources for qubit and qudit measurement-based computations are of the same underlying nature, avoiding the pathologies associated with qubit contextuality.
\end{abstract}

\maketitle


\section{Introduction}

Computers that exploit quantum phenomena are believed to be more powerful than those obeying classical rules.  Many features of quantum theory have been proposed as the origin of this supposed quantum computational power:  entanglement, superposition, and exponential scaling of Hilbert spaces to name a few.  Recently, \textit{contextuality} has been investigated in a variety of scenarios as a potential resource for quantum computation \cite{Raussendorf2009,howard2014contextuality,bermejo2017contextuality,delfosse2015wigner,raussendorf2017contextuality,karanjai2018contextuality,raussendorf2016cohomological,okay2017topological,veitch2014resource,CataniBrowne2017,MansfieldKashefi2018,deSilva2018,Galvao2005,PashayanWallmanBartlett2015}. Contextuality can be thought of as the impossibility of assigning pre-determined outcomes to all potential measurements of a quantum system, independent of their measurement context~\cite{KochenSpecker,mermin1990simple,mermin1993hidden,peres1991two}. This fundamental but peculiar property of quantum systems can exhibit in many ways, most famously in allowing quantum systems to circumvent constraints on classical correlations, leading to strong non-locality (to be defined shortly).

While the majority of recent research into contextuality as a resource for quantum computation sits in the framework of the circuit model, perhaps the most striking results in this direction arise in the measurement-based model of quantum computation (MBQC)~\cite{RaussendorfBrowneBriegel2003}.  Anders and Browne~\cite{AndersBrowne2009} showed that a simple control computer limited to evaluating linear boolean functions can be boosted in power, to one that can evaluate general (non-linear) functions, when given the measurement outcomes on a resource state that constitutes a proof of contextuality.  Their key example was Mermin's simplified GHZ paradox~\cite{mermin1990simple} (a common proof of contextuality), where linear control of the local measurement settings allows for the evaluation of a non-linear NAND gate.  Subsequently, Raussendorf~\cite{Raussendorf2009} extended these results, proving that the computation of a non-linear function (from measurement outcomes with linear pre- and post-processing) implies the impossibility of non-contextual assignments to the single qubit observables. 

These previous results are quite strongly dependent on qubits (two-dimensional quantum systems) being the elementary measured systems.  While qubits are standard in quantum computation, they are unusual from the perspective of contextuality.  For example, an individual qubit is unique as a quantum system that cannot be used to prove Kochen-Specker contextuality~\cite{mermin1993hidden,KochenSpecker}.  On the other hand, Pauli observables on multi-qubit systems (such as in Mermin's simplified Peres and GHZ paradoxes~\cite{mermin1990simple}) allow for state-independent proofs of contextuality, whereas the natural generalisation of these thought experiments to higher-dimensional systems are non-contextual~\cite{Gross2006,veitch2012negative}.  As highlighted in Anders and Browne, deriving a generalisation of their result for measurements of higher-dimensional systems is not straightforward.  In particular, using qudits (higher-dimensional quantum systems) has the potential to confuse the role of contextuality within an individual system with the exhibition of contextuality as a non-local (in Bell's sense) correlation between quantum systems.  In addition, the Mermin-style proof of strong contextuality which serves as the key non-linear example in the qubit case does not straightforwardly generalise to qudits (with generalised Pauli observables).  The key prior research in this direction is by Hoban \textit{et al.}~\cite{HobanWallmanBrowne2011}, where the qudit case was considered from the perspective of Bell inequalities for multi-qudit systems.  It was shown that evaluation of a sufficiently high order polynomial function on a multi-qudit system constituted a proof of strong contextuality.

In this paper, we characterise the role of contextuality and non-locality within a general framework of both qubit and qudit MBQC. We give examples of non-contextual qudit MBQCs (with local dimension $d\geq3$) that evaluate non-linear functions, a result that sits in stark contrast to the qubit case; these examples show that a naive generalisation of Raussendorf's result is not possible. We then consider the space of functions that can be evaluated using MBQCs that admit pre-determined local value assignments to each measured observable. For a polynomial function (in several variables) the existence of a value assignment to local observables places a restriction on its combined degree.  In particular, we prove that the evaluation of a polynomial of sufficiently high degree gives a proof of strong contextuality, reproducing the result of Ref.~\cite{HobanWallmanBrowne2011}, in a way that emphasises the distinctive role of contextuality in local vs global systems and correlations.  A key ingredient in the argument is the notion of \textit{local universality}, which captures both the computational power of MBQCs that admit local value assignments, and exposes the importance of non-locality---that grouping qubits or qudits together to form larger degrees of freedom can lead to additional computational power.  While our results are general (and can apply to post-quantum theories more broadly), we give particular emphasis to the stabilizer-based MBQCs that are most commonly considered in the quantum computing literature.  

The paper is organised as follows. In Sec.~\ref{sec: The Setup} we present a general framework for measurement-based computations (MBC) following Anders and Browne \cite{AndersBrowne2009}, and focus on the important case of MBQC on both qubits and qudits.  In Sec.~\ref{sec: Examples and puzzles}, we highlight some of the problems in generalising the qubit-based results. For instance, we give explicit examples that show how non-linear functions can be computed even in non-contextual qudit MBQC.  In Sec.~\ref{secLocalUniv}, we review some results on functions in finite fields, and derive a description of subspaces in the space of polynomial functions invariant under linear pre- and post-processing. We also introduce the crucial notion of local universality, and prove that MBQC within the stabilizer formalism is itself locally universal, i.e.,~it allows to implement arbitrary functions on individual qudits. The classification of function spaces together with local universality eventually leads to a generalisation of Raussendorf's theorem from qubits to qudits, given in Sec.~\ref{sec: Non-locality in MBQC}.


\section{The Setup}\label{sec: The Setup}

We start by introducing the notion of computation we consider, called $ld$-MBC for ``Measurement-Based Computation with $\mathbb{Z}_d$-linear classical processing''.  This notion fits within the computational framework first introduced in Anders and Browne~\cite{AndersBrowne2009} to study the computational power of correlated resources in a general setting that includes MBQC. For MBQC it is natural to choose the input and output alphabet to be a cyclic group and we restrict our definition of $ld$-MBC in this way. While we are primarily concerned with generalising the results of Raussendorf~\cite{Raussendorf2009} to qudits of prime power dimension, many of our results are not restricted to this particular implementation and hold more generally for any resource with non-local correlations. As such we briefly review the Anders and Browne setup before discussing the special case of MBQC.

A general MBC consists of two components: a correlated resource, and a control computer with restricted computational power. The correlated resource consists of $N$ local parties, each of which is allowed to exchange classical information with the control computer once. No communication between parties is allowed during the computation, and the correlations in their output are entirely due to interactions prior to the computation. During the exchange with the control computer, each party receives an element of $\zz_k$ from the control computer (called the measurement setting), and returns an outcome that is an element of $\zz_l$ (called the measurement outcome). The control computer combines the measurement outcomes in a linear way to produce the computational output.

We are interested in the case where inputs and output are of fixed dimension $d$, such that $d=k=l= p^r$ for $p$ prime, $r \in \mathbb{N}$. In this case we refer to the system as an $ld$-MBC, which is defined more precisely as follows.

\begin{definition}\label{defldMBC}
A $ld$-MBC with classical input $\mathbf{i}$ and classical output $o(\mathbf{i})$ consists of $N$ parties, each of which receives an input $q_k \in \zz_d$ from the control computer, and returns an outcome $s_k\in \zz_d$, for $k = 1,\ldots N$. The inputs and computational output satisfy the following conditions:
\begin{enumerate}
\item The computational output $o(\mathbf{i}) \in \mathbb{Z}_d$ is a linear function of the local measurement outcomes $\mathbf{s} = (s_1,\cdots,s_N)^\intercal$,
\begin{equation}\label{eq: MBC linear post-processing}
o(\mathbf{i}) = Z\mathbf{s} + s_0 \quad \mathrm{mod}\ d,
\end{equation}
for $s_0 \in \mathbb{Z}_d$ and $Z\in\text{Mat}_n(\zz_d)$.
\item The choice of measurements $\mathbf{q} = (q_1,\cdots,q_N)^\intercal$ is related to the measurement outcomes $\mathbf{s}$ and the $\mathbb{Z}_d$-valued classical input $\mathbf{i} = (i_1,\cdots,i_n)^\intercal$ via
\begin{equation}\label{eq: MBC linear pre-processing}
\mathbf{q} = T\mathbf{s} + Q\mathbf{i} \quad \mathrm{mod}\ d,
\end{equation}
for some $T,Q\in \text{Mat}_n(\zz_d)$.
\item For a suitable ordering of the parties $1,\cdots,n$ the matrix $T$ in Eq.~(\ref{eq: MBC linear pre-processing}) is lower triangular with vanishing diagonal. If $T = 0$ the $ld$-MBC is called temporally flat.
\end{enumerate}
\end{definition}

In the above, $\text{Mat}_n(\zz_d)$ denotes the space of $n\times n$ matrices with entries in $\zz_d$. We will mostly be concerned with a special implementation of this setup known as $ld$-MBQC, where the correlated resource is given by a quantum state, and the exchanged information with each party are local measurement settings and outcomes. In particular, a general $ld$-MBQC consists of two components:  a correlated quantum resource, and a control computer with restricted computational power. The quantum resource consists of $N$ local parties, each of which contains a quantum system of dimension $d$ (i.e., a qudit) and a measurement device. For each party there is a choice of $d$ measurement settings, each with $d$ measurement outcomes.

Each party exchanges data with the control computer once. Namely, each party receives a measurement setting $q_k \in \zz_d$ from the control computer to determine the choice of measurement $M_k(q_k)$, and returns the measurement outcome $m_k(q_k) \in \zz_d$, where here and thereafter $k \in \{1,\ldots, N\}$ labels the party. We assume that the eigenvalues of each $M_k(q_k)$ are of the form $\omega^{z}$ for $\omega = e^{\frac{2 \pi i}{d}}$ a $d$th root of unity and $z \in \zz_d$.  Note that such operators are not Hermitian, but we use the terminology `measurement of $M_k$' to denote a projective measurement in the eigenbasis of $M_k$, where we associate the measurement outcome $m_k(q_k)\in \zz_d$ with the eigenvalue $\omega^{m_k(q_k)}$ \footnote{Note that the projective measurement $M_k$ is only constrained on outcomes, in particular, the quantum system can be of dimension greater than $d$ for non-rank-1 projective measurements.}.

The control computer is tasked with evaluating a function $o: \zz_d^n \rightarrow \zz_d$ on some input $\textbf{i} \in \zz_d^n$. It is responsible for the side processing of classical data that determines both the measurement settings and the overall computational output from the measurement outcomes. It has very restricted computational power in that it is only capable of performing processing that is $\mathbb{Z}_d$-linear. The input $q_k$ to each party is determined in a $\mathbb{Z}_d$-linear way from the input $\textbf{i}$ only. (This type of computation is called temporally flat, in Sec.~\ref{sec: Temporal Ordering} we consider the case where the input is also determined by previous measurement outcomes.)

In particular, for each input $\textbf{i}$, the control computer evaluates $N$ linear functions $f_k: \mathbb{Z}_d^n \longrightarrow \mathbb{Z}_d$ on the input string $\mathbf{i}$, which determine the local measurement settings $q_k = f_k(\textbf{i})$ for each party. The control computer evaluates the output function $o(\mathbf{i})$ by adding the measurement outcomes $m_k(q_k)$ modulo $d$. We can say that the quantum resource state is a genuine resource, in that it increases the computational power of the control computer, if this $ld$-MBQC scheme allows for the evaluation of functions that are not $\mathbb{Z}_d$-linear. 

In order to ensure we are exposing only the computational power of the resource we require measurements to be unitarily related,
\begin{equation}\label{eqMeasSett}
M_k(q_k) =  U_k^{q_k} M_k(0) U_k^{-q_k},
\end{equation}
where the $U_k(q_k)$ form a projective representation of $\zz_d$ for some fiducial measurement choice $M_k(0)$. In doing so, we have removed the possibility of non-linear functions being introduced through the choice of measurement setting alone, thereby incidentally increasing the power of the control computer separately from the quantum resource state (cf. Appendix A in Ref.~\cite{raussendorf2016cohomological}).

In practice, we will often restrict unitaries to the Clifford group, a projective representation of the group $\mathbb{Z}_d^{2N} \rtimes \mathrm{Sp}(\mathbb{Z}_d^{2N})$ (cf. Sec.~\ref{sec: Symplectic structure of qudit stabilizer formalism}).  The fiducial measurements $M_k(0)$ are required to have a spectrum given by the $d$th roots of unity as described above.  However, we do not require the $M_k(0)$'s for different $k$ to be generalised Pauli operators in the same Pauli frame, and as such our setup is sufficient to allow for universal quantum computation.\\

In summary, we have the following definition of a temporally flat $ld$-MBQC (cf. Fig.~\ref{figldMBC}). 
\begin{definition}\label{defldMBQC}
	A temporally flat $ld$-MBQC with input string $\mathbf{i}\in \zz_d^n$ and output $o(\mathbf{i})\in \zz_d$, consists of the following components:
	\begin{enumerate}
		\item an $N$ qudit system each of local dimension $d$ where the overall resource state is represented by $\ket{\psi} \in (\mathbb{C}_d)^{\otimes N}$;
		\item a set of measurement settings $q_k = f_k(\mathbf{i})$ for some $\mathbb{Z}_d$-linear functions $f_k : \zz_d^n \rightarrow \zz_d$, independent of previous measurement outcomes;	
		\item a set of measurements $M_k$ on each qubit satisfying Eq.~(\ref{eqMeasSett}), each with $d$ possible eigenvalues $\omega^{m_k}$, where $m_k \in \zz_d$ is the measurement outcome;
		\item the computational output is a linear function of the measurement outcomes $\mathbf{m} = \{m_1,\cdots,m_N\} \in \zz_d^N$,
        \begin{equation}
        o(\mathbf{i}) = Z \mathbf{m} \mod d,
        \end{equation}
        for some $Z \in \text{Mat}_n(\zz_d)$.
	\end{enumerate}
\end{definition}
\begin{figure}[h]
	\centering
	\includegraphics[width=0.95\linewidth]{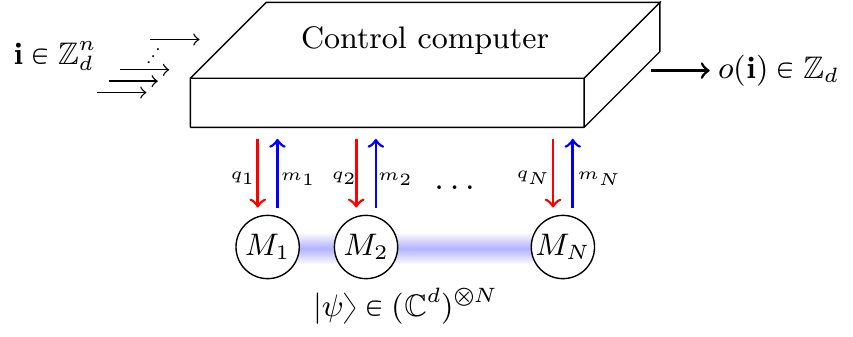}
	\caption{Schematic of the $ld$-MBQC.}
	\label{figldMBC}
\end{figure}

We remark that with a suitably chosen resource state, such as (qudit) cluster states, this model is universal for quantum computation \cite{RaussendorfBrowneBriegel2003,zhou2003quantum}.

\subsection{Structure of $ld$-MBQCs}

We first consider the case of deterministic computations for which the output function has a fixed value for every input $\mathbf{i}$, independent of the local measurement outcomes (which individually might change at different runs), and where the output is the mod-$d$ sum of the measurement outcomes. In this case, the product of all local measurements stabilizes the resource state $\ket{\psi}$, and we can describe the particular output function,
\begin{equation*}
o(\textbf{i}) = \sum_{k=1}^N m_k,
\end{equation*}
concisely through a set of eigenvalue equations,
\begin{equation}\label{eqldEig}
\bigotimes_{k=1}^N U_k^{f_k(\mathbf{i})} M_k(0) U_k^{-f_k(\mathbf{i})} |\psi\rangle = \omega^{o(\mathbf{i})} \ket{\psi},
\end{equation}
for each $\textbf{i}\in \zz_d^n$. For a given input $\textbf{i}$, we define the `global observable' $M(\textbf{i})$ as the tensor product of the local measurements. The global observables are important as their eigenvalues encode the computational output according to Eq.~(\ref{eqldEig}).  We will restrict our attention to deterministic $ld$-MBQCs throughout the paper for simplicity, however our main results generalise to the probabilistic case as discussed in Appendix~\ref{secAppC}.

\subsubsection{Non-contextual and local value assignments}

We  first  define  the  notion  of  contextuality  we consider, called strong contextuality, before  making  the  connection  with $ld$-MBQCs. We note that we only present the definition of strong contextuality as it pertains to the quantum case, for the definition that applies to general no-signalling models see Ref.~\cite{abramskybrandenburger2011}. Denote the set of observables by $\mathcal{O}$. A measurement context $\calC\subseteq\mathcal{O}$ consists of a set of mutually commuting observables, and we denote the set of all contexts by $\mathcal{M}$.  A non-contextual value assignment (NCVA) is a function $s: \mathcal{O}\rightarrow \mathbb{R}$ with the following properties
	\begin{itemize}
		\item[$(i)$] $s(A) \in \text{sp}(A)~ \forall A\in \mathcal{O}$ (that is, $s(A)$ is an eigenvalue of $A$),
		\item[$(ii)$] $\forall \mathcal{C}\in\mathcal{M}$, $s(A)s(B) = s(AB)$ $\forall A,B,AB\in \calC$.
	\end{itemize}

The NCVAs $s(A)$ should be thought of as assigning an outcome that is revealed upon measurement of $A$. Importantly, in the setting of MBQC, the value assignments must be compatible with the resource state in the following sense: we only consider NCVAs that assign a set of outcomes that can jointly occur with nonzero probability upon measurement of the resource state. If no such globally consistent NCVAs exist, then the system is strongly contextual.


In the special case of product observables $A = A_1 \otimes A_2$, $B = B_1 \otimes B_2$, $AB = A_1B_1 \otimes A_2B_2$ the locally measurable observables $A_1$ and $B_1$ ($A_2$ and $B_2$) compose individually, hence local value assignments must too (this is a special case of noncontextuality).  When no such local assignment exists, we say that the system is strongly non-local.

The connection with $ld$-MBQCs is as follows. Each input $\textbf{i}\in \zz_d^n$ can be regarded as selecting a context $\calC(\textbf{i})$ (that is, a set of commuting observables) through
\begin{equation}\label{eqContexts}
\calC(\textbf{i}) = \{M_1(q_1 = f_1(\mathbf{i})), \ldots, M_N(q_N = f_N(\mathbf{i})), M(\textbf{i})\}.
\end{equation}
We have included the global observable $M(\textbf{i}) = M_1(q_1)\otimes M_2(q_2) \otimes\cdots\otimes  M_N(q_N)$ in each context as its measurement outcome is fixed in the deterministic case, and corresponds to the computational output $o(\textbf{i})$  (that is inferred from outcomes of the local measurements). The task of finding a non-contextual hidden variable model is to find (perhaps many) value assignments to local observables that are consistent with the global value assignment. Since the global value assignment is fixed by the computational output, in general, some computations may result in an obstruction to finding a non-contextual hidden variable model in our setting.


\subsection{Anders and Browne Qubit Example}\label{sec: Anders and Browne qubit example}

Here we review the instructive example, due to Anders and Browne \cite{AndersBrowne2009}, which shows that (at least in the qubit $d=2$ case) evaluating non-linear functions deterministically with $l2$-MBQC is possible with an appropriate quantum resource state.

Following Anders and Browne, we consider an $l2$-MBQC with a three-qubit GHZ state, $\ket{\psi_{\rm GHZ}}=(|001\rangle - |110\rangle)/\sqrt{2}$, on which local measurements of Pauli observables $X$ or $Y$ on each qubit enable the deterministic computation of the non-linear NAND gate. The control computer receives two bits $\textbf{i}=(i_1,i_2) \in \zz_2^2$ as input. The classical pre-processing to determine the measurement settings on each qubit amounts to evaluating the linear functions $f_1(\mathbf{i}) = i_1$, $f_2(\mathbf{i}) = i_2$ and $f_3(\mathbf{i}) = i_1 \oplus i_2$. The bits $q_k = f_k(\textbf{i})$ determine the measurement setting on each qubit according to $M_k(0) = X$ and $M_k(1) = Y$, for $k \in \{1,2,3\}$. These measurement settings are related by the unitary transformation $U = \frac{1}{\sqrt{2}}(X + Y)$. If the eigenvalue $+1$ (${-}1$) is observed, then the outcome $m_k(q_k)=0$ ($m_k(q_k)=1$) is recorded. 

These measurement settings define the global observables,
\begin{align}
M(0,0) &= X \otimes X \otimes X, \\
M(1,0) &= Y \otimes X \otimes Y, \\
M(0,1) &= X \otimes Y \otimes Y, \\
M(1,1) &= Y \otimes Y \otimes X,
\end{align}
with the state $\ket{\psi_{\rm GHZ}}$ an eigenvector of each observable, and with corresponding eigenvalues given by 
\begin{equation}\label{eqNAND}
({-}1)^{o(i_1,i_2)} = ({-}1)^{i_1 i_2 + 1} = ({-}1)^{\text{NAND}(i_1,i_2)}.
\end{equation}

In other words, the measurement outcomes of the local $X$ and $Y$ measurements can be linearly processed to compute the output function $o(\mathbf{i}) = \sum_{k=1}^3 m_k(\mathbf{i})$, which by Eq.~(\ref{eqNAND}) gives $o(i_1,i_2) = \text{NAND}(i_1,i_2)$.  Thus, the $l2$-MBQC evaluates a non-linear function of the input that could not be evaluated by the control computer alone.


\subsection{Central Questions}\label{sec: Central Questions}

What $ld$-MBQCs can compute non-linear functions?  And what properties of the quantum resource state enable this additional power?  In the qubit case, the conditions under which an $l2$-MBQC allows for the computation of non-linear boolean functions---functions that would otherwise be beyond the capabilities of the control computer---have been well characterised.  In particular, Raussendorf has shown that any $l2$-MBC and thus any $l2$-MBQC that computes a non-linear boolean function constitutes a proof of strong contextuality~\cite{Raussendorf2009}; if a $l2$-MBQC can be described by a non-contextual `hidden variable model', where the outcomes associated with measurements are pre-determined by (local) value assignments, it is restricted to computing linear functions. (Note that this result also holds in the temporally ordered case, where measurement settings can be additionally determined by past measurement outcomes.)  The above example is strongly contextual, and so there does not exist any assignment of pre-determined measurement outcomes to each of the local observables that can reproduce the correlations required for the computation of the NAND gate. We restate Raussendorf's theorem of Ref.~\cite{Raussendorf2009}. 

\begin{theorem*}[Raussendorf]\label{thm: Raussendorf qubit contextuality}
	Be $M$ an $l2-MBC$ which deterministically evaluates a boolean function $o: \mathbb{Z}_2^n \longrightarrow \mathbb{Z}_2$. If $o(\mathbf{i}) \in \mathbb{Z}_2$ is non-linear $\mathrm{mod} \ 2$ in $\mathbf{i} \in \mathbb{Z}_2^n$ then $M$ is strongly contextual.
\end{theorem*}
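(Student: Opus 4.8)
The plan is to prove the contrapositive: assume a non-contextual (indeed local) value assignment exists that is consistent with the resource state, and show that the function $o$ computed by the $l2$-MBC must then be linear $\mathrm{mod}\ 2$. The starting point is the observation that in the deterministic case, each input $\mathbf{i}\in\zz_2^n$ selects a context $\calC(\mathbf{i})$ as in Eq.~(\ref{eqContexts}), and the global observable $M(\mathbf{i}) = M_1(q_1)\otimes\cdots\otimes M_N(q_N)$ has a fixed eigenvalue $(-1)^{o(\mathbf{i})}$ on $\ket{\psi}$. A non-contextual value assignment $s$ provides, for each party $k$ and each measurement setting $q_k$, a predetermined outcome $s(M_k(q_k))\in\{+1,-1\}$. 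By property $(ii)$ applied to the product structure of the context, the assigned global outcome must factor as $s(M(\mathbf{i})) = \prod_{k=1}^N s\bigl(M_k(f_k(\mathbf{i}))\bigr)$, and consistency with the resource state forces this to equal the computational output $(-1)^{o(\mathbf{i})}$.

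Next I would linearise. Writing the local outcomes additively as bits $t_k(q_k)\in\zz_2$ via $s(M_k(q_k)) = (-1)^{t_k(q_k)}$, the factorisation above becomes
\begin{equation}\label{eq: proposal additive}
o(\mathbf{i}) = \sum_{k=1}^N t_k\bigl(f_k(\mathbf{i})\bigr) \quad \mathrm{mod}\ 2.
\end{equation}
Now the crucial simplification specific to $d=2$ comes into play: each measurement setting $q_k = f_k(\mathbf{i})$ takes only the two values $0$ and $1$, so the outcome function $t_k$ has only two possible values, $t_k(0)$ and $t_k(1)$. Any function on a single bit is affine over $\zz_2$, so I can write $t_k(q_k) = t_k(0) + \bigl(t_k(0)+t_k(1)\bigr)\,q_k \mod 2$, which is $\zz_2$-linear in $q_k$. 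Since $q_k = f_k(\mathbf{i})$ is itself a $\zz_2$-linear function of $\mathbf{i}$ by the definition of the $l2$-MBC (Definition~\ref{defldMBC}), each summand $t_k(f_k(\mathbf{i}))$ is affine in $\mathbf{i}$, and hence so is their mod-$2$ sum.

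Combining these observations, Eq.~(\ref{eq: proposal additive}) exhibits $o(\mathbf{i})$ as a sum of affine functions of $\mathbf{i}$, hence $o$ is itself affine, i.e.\ linear mod $2$ up to a constant. This contradicts the assumed non-linearity of $o$, so no consistent non-contextual value assignment can exist, and the $l2$-MBC is strongly contextual. I expect the main obstacle to be the bookkeeping around the temporally flat versus temporally ordered case: when $T\neq 0$ in Eq.~(\ref{eq: MBC linear pre-processing}) the settings $q_k$ depend on earlier \emph{outcomes} as well as on $\mathbf{i}$, so the clean substitution $q_k = f_k(\mathbf{i})$ breaks down. Handling this requires arguing that, for a \emph{fixed} consistent value assignment, the earlier outcomes are themselves predetermined functions of $\mathbf{i}$, so that $q_k$ remains an affine function of $\mathbf{i}$ and the linearisation step survives; this is the one place where the lower-triangular, vanishing-diagonal structure of $T$ is essential, as it lets one resolve the settings party-by-party without circularity.
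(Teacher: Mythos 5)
Your proposal is correct and follows essentially the same route the paper takes: the paper does not reprove the theorem in detail (it cites Raussendorf), but its sketch in Sec.~\ref{sec: Functions on Finite Fields} is exactly your argument---a non-contextual value assignment forces $o(\mathbf{i})=\sum_k \phi_k(f_k(\mathbf{i}))$ with each $\phi_k:\mathbb{Z}_2\to\mathbb{Z}_2$ necessarily affine, hence $o$ affine, contradicting non-linearity. Your treatment of the temporally ordered case by inductively resolving settings along the lower-triangular order also matches the paper's remark in Sec.~\ref{sec: Temporal Ordering} that linear functions are closed under such iterated composition, which is why the result survives temporal ordering for $d=2$.
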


The qudit case with $d\geq 3$ is much less explored. While qubits are natural to consider in (measurement-based) quantum computation, they are pathological from the perspective of contextuality. Single qubits are non-contextual by the Kochen-Specker theorem, while multiple qubits exhibit state-independent contextuality using only Pauli observables in contrast to its qudit counterparts. A natural question is  whether the interplay between contextuality and non-linearity holds more generally in $ld$-MBQCs with $d\geq 3$.

In the following section, we will show that the qudit case is not so straightforward, and certain kinds of non-linear functions may be computed even with non-contexual $ld$-MBQCs.


\section{Examples and Puzzles}\label{sec: Examples and puzzles}

In this section, we illustrate some of the subtleties involved in the qudit case. We focus on a particularly interesting class of $ld$-MBQCs based on the qudit stabilizer formalism. Unlike the qubit case, such $ld$-MBQCs are non-contextual.  (An explicit non-contextual hidden variable model for the qubit stabilizer theory is given by the discrete Wigner function~\cite{Gross2006,veitch2012negative}. There has been a considerable amount of recent research investigating the differences between the qubit and qudit stabilizer subtheories from the perspective of contextuality; see for example~\cite{Gross2006,abramsky2017complete,CataniBrowne2017,lillystone2018contextuality,karanjai2018contextuality,raussendorf2017contextuality,delfosse2015wigner,bermejo2017contextuality}.)  

Contrary to what one might naively expect, we will see that qudit stabilizer $ld$-MBQCs possess a computational power that exceeds $\mathbb{Z}_d$-linear processing.  That is, non-linear functions can be evaluated using an MBQC that is entirely non-contextual, in stark contrast to the qubit case. This demonstrates that the relationship between strong contextuality and non-linearity in the qubit case is not the end of the story, and for qudits with $d \geq 3$ we need a finer functional constraint.

In this section we will be restricting to the case where the measurements $M_k$ belong to the qudit Pauli group, and the unitaries $U_k$ that relate these measurements as in Eq.~(\ref{eqMeasSett}) belong to the Clifford group.  Important to our considerations is that when $d$ is odd, this subtheory admits a non-contextual description~\cite{Gross2006}. We outline a very restricted case under which linearity in the output can be recovered, along with two examples within this non-contextual framework that result in non-linear output functions.


\subsection{Symplectic Structure of Qudit Stabilizer Formalism}\label{sec: Symplectic structure of qudit stabilizer formalism}

Our results will make extensive use of the symplectic structure of the qudit Clifford group, and so we briefly review this formalism here.  For more details, see Ref.~\cite{deBeaudrap2013}.

Recall that the Pauli group $\mathcal{P}^{\otimes N}_d$ over $\mathbb{Z}_d$ is the group generated by $N$-fold tensor products of individual elements from $\langle X_k,Z_k,\omega \mathbbm{1}_k \rangle$, $k \in \{1,\cdots,N\}$ with $X |z\rangle = |z+1\rangle$, $Z |z\rangle = \omega^z |z\rangle$, and $\omega = e^{\frac{2\pi i}{d}}$ a $d$-th root of unity. 

These qudit Pauli operators can be conveniently represented (up to phase) by Weyl operators. A Weyl operator $W_\mathbf{v}$ for $\mathbf{v} = (\mathbf{a},\mathbf{b})^\intercal \in \mathbb{Z}^{2N}_d$ is defined as the $N$-fold tensor product of local operators $W_{a,b} = \tau^{-ab}Z^aX^b$ where $\tau^2 = \omega$ and $a,b \in \mathbb{Z}_d$.  Note that Weyl operators are generalised Pauli operators with a particular choice of phase.

The Clifford group $\mathcal{C}_N(d) \subset \mathcal{U}(\mathcal{H}^{\otimes N}_d)$ of $\mathcal{P}^{\otimes N}_d$ is the group of unitary operators such that $VPV^\dagger \in \mathcal{P}^{\otimes N}_d$ for all $P \in \mathcal{P}^{\otimes N}_d$, $V \in \mathcal{C}_N(d)$.  All ($N$-qudit) Clifford operators $V \in \mathcal{C}_N(d)$ factorise,
\begin{equation*}
V = U W_\mathbf{x}, \quad \mathbf{x} \in \mathbb{Z}^{2N}_d,
\end{equation*}
into a Weyl operator $W_\mathbf{x}$ and an element of the group of symplectic Clifford operators $U \in \sigma\mathcal{C}_N(d)$. These are defined as automorphisms on the set of Weyl operators, i.e. for all $\mathbf{v} \in \mathbb{Z}_d^{2N}$ it holds that $U W_\mathbf{v} U^\dagger = W_\mathbf{w}$ for some $\mathbf{w} \in \mathbb{Z}_d^{2N}$, in fact, they preserve the underlying symplectic structure,
\begin{equation*}\label{eq: symplectic Clifford group}
U W_\mathbf{v}U^{-1} = W_{C_U\mathbf{v}}, \ \mathrm{for} \ \mathrm{some} \ C_{U} \in \mathrm{Sp}_{2N}(\mathbb{Z}_d).
\end{equation*}
The group $\mathrm{Sp}_{2N}(\mathbb{Z}_d)$ denotes the group of symplectic transformations, i.e. of linear transformations $C: \mathbb{Z}^{2N}_d \longrightarrow \mathbb{Z}^{2N}_d$ such that $C^T \sigma_{2N} C = \sigma_{2N}$ where the symplectic matrix is given as $\sigma_{2N} = \begin{bmatrix}
0_N & \mathbbm{1}_N \\
-\mathbbm{1}_N & 0_N
\end{bmatrix}$. Moreover, the symplectic inner product is given by $\left[\mathbf{v},\mathbf{w}\right] = \textbf{v}^T \sigma_{2n} \textbf{w}$, for $\textbf{v}, \textbf{w} \in \zz_d^{2N}$.

\subsection{Computational Output and (Non-)Linearity}
Let us now relate the transformation properties of Pauli observables under Clifford operations to the computational output of the $ld$-MBQC, using the symplectic formalism.  From the defining commutation relation of Weyl operators, 
\begin{equation}\label{eq: Weyl commutator}
W_{\mathbf{v}}W_\mathbf{w} = \omega^{\left[\mathbf{v},\mathbf{w}\right]}W_\mathbf{w} W_\mathbf{v} \, ,
\end{equation}
and the fact that symplectic operators preserve the symplectic inner product, we obtain the following relation for individual qudits (for clarity we omit the subscript $k$ labeling different qudit sites),
\vspace{0.1cm}

\begin{align}\label{eq: general functional dependence}
&V^{f(\mathbf{i})} \ W_\mathbf{v} \ V^{-f(\mathbf{i})} \notag \\ 
& = (U W_\mathbf{x})^{f(\mathbf{i})} \ W_\mathbf{v} \ (U W_\mathbf{x})^{-f(\mathbf{i})} \notag \\
& = (UW_\mathbf{x})^{f(\mathbf{i}) - 1} \ W_{C_U\mathbf{x}} \  W_{C_U\mathbf{v}} \ W_{-C_U\mathbf{x}} \ (W_{-\mathbf{x}}U^{-1})^{f(\mathbf{i})-1} \notag \\
& = (W_{C_{U}\mathbf{x}} \ \cdots \ W_{C_{U}^{f(\mathbf{i})}\mathbf{x}}) \  W_{C_{U}^{f(\mathbf{i})}\mathbf{v}} \  (W_{-C_{U}^{f(\mathbf{i})}\mathbf{x}} \cdots \ W_{-C_{U}\mathbf{x}}) \notag \\
& = \omega^{\left[ C_U^{} \mathbf{x},C_U^{f(\mathbf{i})}\mathbf{v} \right] + \left[ C_U^2\mathbf{x},C_U^{f(\mathbf{i})}\mathbf{v} \right] + \ \cdots \ + \left[ C_U^{f(\mathbf{i})}\mathbf{x},C_U^{f(\mathbf{i})}\mathbf{v} \right]} W_{C_U^{f(\mathbf{i})}\mathbf{v}} \notag \\
& = \omega^{ \sum_{k=0}^{f(\mathbf{i})-1} [\mathbf{x},C^k_{U}\mathbf{v}]} W_{C_{U}^{f(\mathbf{i})}\mathbf{v}},
\end{align}

for any Clifford unitary $V \in \mathcal{C}_N(d)$. Note that the phase in Eq.~(\ref{eq: general functional dependence}) is state-independent, it only depends on the Weyl commutation relations.


\subsubsection{Example 1: Linear Output}\label{sec: Restricting to linear functions}

As a first example, we examine the very restrictive case, where the controlled unitary operators in Def.~\ref{defldMBQC} are Pauli operators. Using only controlled Pauli operators in Eq.~(\ref{eq: general functional dependence}), i.e., $V = W_\mathbf{x}$, $\mathbbm{1} = U \in \sigma\mathcal{C}_1(d)$, the phase depends linearly on the input functions $f(\mathbf{i})$. In fact, we simply obtain a variant of Eq.~(\ref{eq: Weyl commutator}),
\begin{equation}\label{eq: Weyl produces linear functions}
W_\mathbf{x}^{f(\mathbf{i})} W_\mathbf{v} W_\mathbf{x}^{-f(\mathbf{i})} = \omega^{f(\mathbf{i})\left[\mathbf{x},\mathbf{v}\right]} W_\mathbf{v} \, .
\end{equation}
From  Eq.~(\ref{eq: Weyl produces linear functions}), we infer that conjugation of a Pauli operator by Pauli operators results in multiplication of a phase, yet does not change the context. That is $M(\textbf{i}) \propto M(\textbf{0})$ and $\calC(\textbf{i}) \propto \calC(\textbf{0})$ for all inputs, meaning the output $o(\textbf{i})$ is linearly related to $o(\textbf{0})$. As a result, we are trivially restricted to be non-contextual.


\subsubsection{Example 2:  Quadratic Output}\label{sec: Example 1:  Quadratic functions}

In the next two examples we remain within the stabilizer subtheory but extend to arbitrary Clifford unitaries. First, we show how using control unitaries that are symplectic  Cliffords (i.e., Cliffords that are not Paulis) allows us to compute quadratic functions from the symplectic inner product. In the following, we assume $d\geq 3$.

The generalised phase gate $S$ is an element of the symplectic Clifford group,
\begin{equation*}
S = \sum_{z=0}^{d-1} \tau^{z^2} |z\rangle \langle z| \ \in\ \sigma \mathcal{C}_1(d) \, ,
\end{equation*}
which up to phase acts on the generalised Pauli $X = W_\mathbf{v}, \mathbf{v} = (0,1)^\intercal$ by multiplication with Pauli $Z$, $S^kXS^{-k} = \tau^{-k}Z^kX$. Consider the following state of $N=2d$ qudits,
\begin{equation*}
|\psi\rangle = \frac{1}{\sqrt{d}}\sum_{z=0}^{d-1} \ket{z}^{\otimes 2} \ket{z + 1}^{\otimes 2} \cdots \ket{ z+d-1}^{\otimes 2} .
\end{equation*}

We fix all of the linear functions $f_k(\textbf{i}) = f(\textbf{i})$ to be the same, and note that we have the following stabilizer relations,
\begin{equation}\label{eq: stabilizer relation}
\bigotimes_{k=1}^{2d} \left( S^{f(\mathbf{i})}W_\mathbf{v}S^{-f(\mathbf{i})} \right)_k |\psi\rangle = |\psi\rangle, \quad \mathbf{v} = (0,1)^\intercal .
\end{equation}
where the parentheses $( \cdot )_k$ denote the subsystem on which the operator acts.

We can use this setup together with the symplectic structure of Weyl operators to implement quadratic output functions through accumulated symplectic products.  Without loss of generality, we choose the first qudit and take $V_1 = (SW_\mathbf{x})_1$ for $\mathbf{x} = (0,-1)^\intercal$ in Eq.~(\ref{eq: stabilizer relation}) such that $[\mathbf{x},C_S^k\mathbf{v}] = k$, while leaving $V_k = S_k$ for $k\geq2$. Hence,
\begin{equation*}
o(\mathbf{i}) = \sum_{k=0}^{f(\mathbf{i})-1} [\mathbf{x},C_S^k\mathbf{v}]
= \frac{f(\mathbf{i})(f(\mathbf{i})-1)}{2}.
\end{equation*}
Despite $C_S$ being a linear map, it leads to quadratic output functions due to the symplectic structure of the Weyl group.


\subsubsection{Example 3: General Non-Linear Output}\label{sec: Example 3:  more non-linear functions}

With our final example, we show that one can even go beyond quadratic functions. Another symplectic Clifford operator is given by,
\begin{equation}
\label{eq:Mr}
M_u := \sum_{k=0}^{d-1} |uk\rangle \langle k|, \quad u \in \mathbb{Z}_d^\times ,
\end{equation}
where $R^\times$ denotes the multiplicative group of units within a ring $R$.

Now let $|\psi \rangle = |1\rangle \in \mathbb{C}_d$, $\mathbf{v} = (1,0)^\intercal \in \mathbb{Z}_d^2$ and consider the symplectic Clifford unitary $U = M_u^{f(\mathbf{i})} \in \sigma\mathcal{C}_1(d)$ for some $u \in \mathbb{Z}_d^\times$ acting on $Z = W_\mathbf{v}$,
\begin{align}\label{eq: construct exponential}
U W_{\mathbf{v}}U^{-1}|\psi\rangle &= M_u^{f(\mathbf{i})}W_{\mathbf{v}}M_u^{-f(\mathbf{i})}|\psi\rangle \notag \\
&= W_{C_{M_u}^{f(\mathbf{i})}\mathbf{v}}|\psi\rangle \notag \\
&= \omega^{u^{-f(\mathbf{i})}}|\psi\rangle .
\end{align}

The output function $o(\mathbf{i}) = u^{-f(\mathbf{i})}$ is again non-linear, however, the underlying system is part of the qudit stabilizer formalism and can be given a non-contextual hidden variable model with local value assignments $m_k(\mathbf{i})$ such that $o(\mathbf{i}) = \sum_{k=1}^N m_k(\mathbf{i})$ \cite{deBeaudrap2013, Gross2006}.\medskip

Examples 2 and 3 are therefore in clear contrast to Raussendorf's result in the qubit case, as they allow for the evaluation of non-linear functions with MBQCs that are non-contextual. It turns out that linearity is a special property of boolean functions which does not translate to the qudit case.


\section{Finite Fields and Local Universality}\label{secLocalUniv}

Given these conceptual differences between the qubit and qudit cases as illustrated by the examples in the previous section, it will be helpful to take a more abstract view on the problem.  We begin by reviewing some properties of functions on finite fields.


\subsection{Functions on Finite Fields}\label{sec: Functions on Finite Fields}

Let $\mathbb{F}_d$ be the finite field with $d = p^r$ ($p$ prime and $r \in \mathbb{N}$) elements and $\Omega^{\mathbb{F}_d}_n := \mathbb{F}_d[x_1,\ldots,x_n]$ the polynomial ring in $n$ variables, $x_1,\ldots,x_n \in \mathbb{F}_d$.

For infinite fields there are many non-polynomial functions, but not so for finite fields.  
\vspace{0.1cm}

\begin{theorem*}[Functions on finite fields]\label{thm: polynomials in finite fields}
Let $\mathbb{F}_d$ be the finite field with $d$ elements, and $n \in \mathbb{N}$. Then every function $g : \mathbb{F}_d^n \longrightarrow \mathbb{F}_d$ is given by a polynomial, $g \in \Omega^{\mathbb{F}_d}_n$ of partial degree  less or equal to $d-1$ in each variable $x_i$.
\end{theorem*}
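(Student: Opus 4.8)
The plan is to prove the statement constructively, by writing down for each function an explicit interpolating polynomial whose partial degrees are manifestly at most $d-1$, so that both assertions (polynomiality and the degree bound) follow simultaneously. The starting point is the elementary identity governing exponentiation in $\mathbb{F}_d$: since the multiplicative group $\mathbb{F}_d^\times$ has order $d-1$, Lagrange's theorem gives $c^{d-1} = 1$ for every nonzero $c \in \mathbb{F}_d$, while $0^{d-1} = 0$ (here $d = p^r \geq 2$, so $d-1 \geq 1$). Consequently the single-variable expression $1 - (x-a)^{d-1}$ evaluates to $1$ when $x = a$ and to $0$ at every other point of $\mathbb{F}_d$, furnishing a polynomial indicator of the point $a$ of degree exactly $d-1$.

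First I would assemble these single-variable indicators into a multivariate point indicator: for each $\mathbf{a} = (a_1,\ldots,a_n) \in \mathbb{F}_d^n$, set
\begin{equation*}
\delta_{\mathbf{a}}(\mathbf{x}) = \prod_{i=1}^n \bigl(1 - (x_i - a_i)^{d-1}\bigr),
\end{equation*}
which by the previous step equals $1$ if $\mathbf{x} = \mathbf{a}$ and $0$ otherwise, and which has partial degree exactly $d-1$ in each variable $x_i$ (only the $i$-th factor involves $x_i$). Summing these against the prescribed values then yields the representation: for any $g : \mathbb{F}_d^n \to \mathbb{F}_d$ the polynomial $\sum_{\mathbf{a} \in \mathbb{F}_d^n} g(\mathbf{a})\, \delta_{\mathbf{a}}(\mathbf{x})$ agrees with $g$ at every point (only the term $\mathbf{a} = \mathbf{x}$ survives) and inherits the partial-degree bound from each $\delta_{\mathbf{a}}$.

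The degree bound, which one might expect to be the main obstacle, is in fact free in this approach: because each variable enters only through the factor $(x_i - a_i)^{d-1}$, no separate reduction step (replacing $x_i^d$ by $x_i$ on $\mathbb{F}_d$) is ever required. The only point that genuinely needs care is the base identity $c^{d-1} \in \{0,1\}$, which is exactly the order argument for $\mathbb{F}_d^\times$; everything else is bookkeeping.

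Finally, I would record a dimension count that both confirms the picture and supplies uniqueness of the reduced representative. The polynomials of partial degree $\leq d-1$ in each variable form an $\mathbb{F}_d$-vector space of dimension $d^n$, with one basis monomial $x_1^{a_1}\cdots x_n^{a_n}$ per exponent tuple satisfying $0 \leq a_i \leq d-1$, and this matches the dimension $d^n$ of the space of all functions $\mathbb{F}_d^n \to \mathbb{F}_d$. The evaluation map between these two spaces is surjective by the interpolation just constructed, hence an isomorphism of equal-dimensional spaces; equivalently, the indicators $\{\delta_{\mathbf{a}}\}$ are a basis, so the reduced polynomial representing each $g$ is in fact unique. (An alternative route to the same injectivity is a short induction on $n$, using that a nonzero univariate polynomial of degree $\leq d-1$ cannot vanish at all $d$ points of $\mathbb{F}_d$.)
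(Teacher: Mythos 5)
Your proof is correct and follows essentially the same route as the paper's: both build the point indicator $\prod_{i=1}^n\bigl(1-(x_i-a_i)^{d-1}\bigr)$ from the order of $\mathbb{F}_d^\times$ and then interpolate $g$ as a linear combination of these delta functions. The closing dimension count giving uniqueness of the reduced representative is a harmless addition not present in the paper.
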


\begin{proof}
See Appendix~\ref{secAppA}.
\end{proof}

Here, the \textit{partial degree} of a variable within a product is the exponent of that variable.\medskip

For qubits, Raussendorf proved that computability of a non-linear function using an l2-MBC implies strong contextuality of the underlying system~\cite{Raussendorf2009}.  We note that this result depends critically on the application of the above theorem to $d=2$, where it states that all functions $g: \mathbb{Z}_2 \rightarrow \mathbb{Z}_2$ are linear.  In a non-contextual hidden variable model, the measurement outcomes $m_k$ are determined by a local value assignment that can depend on the choice of measurement setting $q_k = f_k(\mathbf{i})$; these local value assignments $m_k(\mathbf{i})$ are necessarily linear by the above theorem.  In the same way, the final output function $o(\mathbf{i})$ on an $l2$-MBQC on $N$ qubits is a linear combination of the measurement outcomes at each site,
\begin{equation}\label{eqOutputRelation}
o(\mathbf{i}) = \sum_{k=1}^N m_k(\mathbf{i})\, .
\end{equation}
If $o(\mathbf{i})$ is non-linear, then the assumption of non-contextuality must be incorrect.  Non-linearity in this case can only arise from products of at least two different inputs, as in Anders and Browne's NAND gate illustrated in Sec.~\ref{sec: Anders and Browne qubit example}.  

In generalising to the qudit case, however, a local value assignment is not required to be a linear function of the choice of measurement.  Here, this simple connection between non-contextuality and linearity is lost.  Nevertheless, we now show how to use the above theorem for functions on finite fields to build a new, general connection.


\subsection{Linearly Closed Subspaces}\label{sec: Linearly Closed Subspaces}

We would like to characterise the space of all functions of the form $g: \mathbb{F}_d^n \longrightarrow \mathbb{F}_d$ in terms of non-trivial subspaces under linear pre- and post-processing, which we denote by $\subset_l$. First, define the space of linear functions as 
	\begin{equation*}\label{eq: linear subspace}
	L^{\mathbb{F}_d}_n := \{l \in \Omega^{\mathbb{F}_d}_n \mid l(x) = c_0 + \sum_{j=1}^n c_jx_j, \ x \in \mathbb{F}_d^n, c_j \in \mathbb{F}_d \},
	\end{equation*}
where $x = (x_1, \ldots , x_n)$. Then a schematic that represents the linear pre- and post-processing that we consider is given in Fig.~\ref{fig: schematic}.  Every measurement outcome $m_k: \mathbb{F}^n_d \rightarrow \mathbb{F}_d$ is a function from the inputs $\mathbf{i} \in \mathbb{F}^n_d$ to measurement outcomes that decomposes into a linear function $f_k \in L^{\mathbb{F}_d}_n$ and a contribution $\phi_k: \mathbb{F}_d \rightarrow \mathbb{F}_d$,
\begin{equation}
m_k = \phi_k \circ f_k, \quad \forall k \in \{1,\cdots,N\}.
\end{equation}
We refer to $\phi_k$s as correlation functions.

\captionsetup[subfigure]{position=top, labelfont=bf,textfont=normalfont,singlelinecheck=off,justification=raggedright}
\begin{figure}[h]
	\centering
	\subfloat[]{	\includegraphics[width=0.95\linewidth]{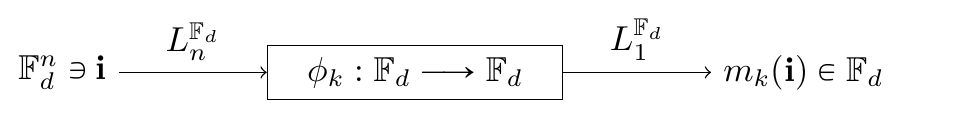}}	\label{figLinearA}
	\quad
	\subfloat[]{	\includegraphics[width=0.95\linewidth]{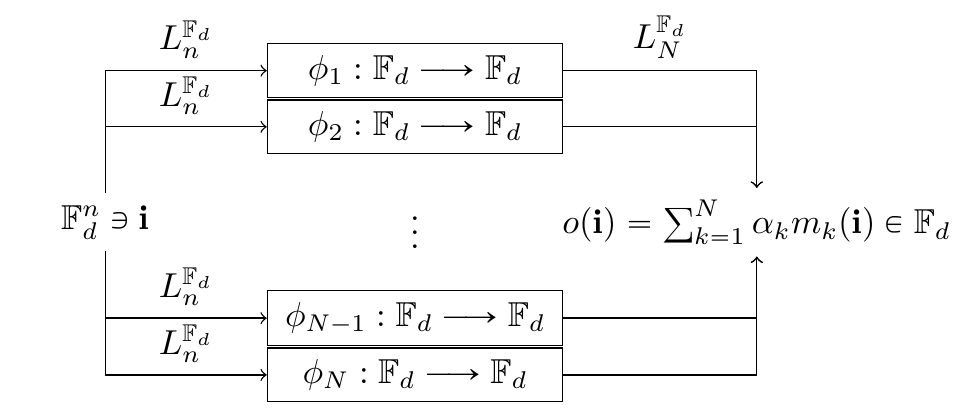}}	\label{figLinearB}
	\caption{Schematic of functional signatures within the general correlation setup introduced by Anders and Browne~\cite{AndersBrowne2009}. (a) In a non-contextual setting, local value assignments $m_k: \mathbb{F}^n_d \rightarrow \mathbb{F}_d$ split into (classical) linear pre- and post-processing and local (quantum) measurements $\phi_k$. (b) The same holds for the output function $o: \mathbb{F}_d^n \rightarrow \mathbb{F}_d$, $o(\textbf{i}) = \sum_{k=1}^N \alpha_k m_k(\textbf{i})$ for some $\alpha_k \in \mathbb{F}_d$. Any additional complexity arises from the (quantum) measurements $\phi_k: \mathbb{F}_d \rightarrow \mathbb{F}_d$.}
	\label{fig: schematic}
\end{figure}

Note first that the space of functions available after pre- and post-processing will at least contain the original function space, as the identity is a linear function. Consider the two trivial cases: the space of all functions $\Omega^{\mathbb{F}_d}_n$; and the space of linear functions $L^{\mathbb{F}_d}_n$. Any linear combination of linear functions results again in a linear function, hence, we find $L^{\mathbb{F}_d}_n \subset_l \Omega^{\mathbb{F}_d}_n$.

Aside from these two trivial cases, there also exist non-trivial spaces that are stable under linear pre- and post-processing.  Define the following subspaces for $1 \leq \delta \leq n(d-1)$, using the notation $\langle \cdot \rangle_l$ for linear span,
\begin{equation}\label{defn: linear subspaces}
\Omega^{\mathbb{F}_d}_n(\delta) := \Big\langle \prod_{j=1}^n x_j^{a_j} \ \Big\mid \ a_j \in \mathbb{F}_d, \ \sum_{j=1}^n a_j \leq \delta \ \Bigr\rangle_l.
\end{equation}
The function spaces $\Omega^{\mathbb{F}_d}_n(\delta)$ depend on the field $\mathbb{F}_d$, the number of inputs $n$ and the maximal combined degree of monomials $\delta$. (The \textit{combined degree} within a product of variables is the sum of their respective exponents as in Eq.~(\ref{defn: linear subspaces}).)  In other words, $\Omega^{\mathbb{F}_d}_n(\delta)$ contains all polynomials $g: \mathbb{F}_d^n \longrightarrow \mathbb{F}_d$ of degree at most $\delta$.

We prove a lemma, detailing the behaviour of these subspaces under linear pre- and post-processing.

\begin{lemma}\label{lm: linearly stable subspaces}
Let $g \in \Omega^{\mathbb{F}_d}_n$  be a polynomial of degree $|g| = \delta \leq n(d-1)$, then the set of all polynomials generated by linear pre- and post-processing coincides with $\Omega^{\mathbb{F}_d}_n(\delta)$, i.e.
\begin{equation*}
L^{\mathbb{F}_d}_1 \circ g \circ L^{\mathbb{F}_d}_n = \Omega^{\mathbb{F}_d}_n(\delta) \, .
\end{equation*}
\end{lemma}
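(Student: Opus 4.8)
The plan is to prove the two inclusions separately, reading the left-hand side as the linear span of all functions $\mathbf i \mapsto \alpha + \beta\, g(\ell_1(\mathbf i),\dots,\ell_n(\mathbf i))$ with $\ell_1,\dots,\ell_n \in L_n^{\mathbb F_d}$ and $\alpha,\beta\in\mathbb F_d$. The inclusion $L_1^{\mathbb F_d}\circ g\circ L_n^{\mathbb F_d}\subseteq\Omega_n^{\mathbb F_d}(\delta)$ is the routine direction: substituting each variable $x_j$ of a monomial $\prod_j x_j^{a_j}$ by an affine form $\ell_j$ of degree at most one produces a product of total degree at most $\sum_j a_j$, and the subsequent reduction to partial degrees $\le d-1$ via $x^d=x$ (legitimate by the finite-field theorem) can only lower the combined degree, never raise it. Since affine post-composition $y\mapsto\alpha+\beta y$ and taking linear combinations preserve the bound too, every element of the left-hand side has combined degree $\le\delta$.

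The real content is the reverse inclusion, for which I must exhibit every monomial $\prod_j x_j^{t_j}$ with $\sum_j t_j\le\delta$ inside the span. I would first isolate a single top-degree monomial. Because $|g|=\delta$, some monomial $x^{\mathbf a}$ with $|\mathbf a|=\delta$ occurs in $g$; applying the diagonal scalings $x_j\mapsto\lambda_j x_j$ for $\lambda_j\in\mathbb F_d^\times$ turns the span into a representation of the torus $(\mathbb F_d^\times)^n$, and since $d-1$ is invertible in $\mathbb F_d$ one may average $g(\lambda\cdot\mathbf x)$ against the appropriate character to project $g$ onto a chosen weight space. Combined with translations to disentangle the different total degrees sharing a fixed weight, this should isolate a single top-degree monomial $x^{\mathbf a}$. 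I would then descend: from one degree-$\delta$ monomial, translations $x_j\mapsto x_j+c$ peel off lower powers of a variable and substitutions of the form $x_j\mapsto\sum_m i_m$ spread a single power across several new variables, after which further weight-projections and linear combinations recover the individual lower-degree monomials, inducting downward on the combined degree to fill out all of $\Omega_n^{\mathbb F_d}(\delta)$.

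The hardest step is this downward descent in characteristic $p$. The obvious degree-lowering device, the finite difference $\Delta_c g = g(\,\cdot+c\,)-g(\,\cdot\,)$, does not reliably drop the degree by one whenever the relevant binomial or multinomial coefficients vanish modulo $p$---most dramatically for Frobenius-type terms such as $x_j^{p}$, where $(x_j+c)^{p}-x_j^{p}=c^{p}$ is constant, so that the descent stalls entirely. Controlling it therefore requires substitutions adapted to the base-$p$ digits of the exponents, together with repeated appeals to the finite-field theorem to guarantee that the auxiliary one-variable functions invoked along the way are themselves realizable as polynomials of the intended degree. I expect the careful bookkeeping of which monomials survive reduction both modulo $x^d=x$ and modulo $p$---and the precise role of the hypothesis $\delta\le n(d-1)$ in leaving enough room for this reconstruction---to be the delicate heart of the argument.
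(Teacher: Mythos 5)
Your forward inclusion is fine and is essentially the paper's: substituting affine forms into a degree-$\delta$ polynomial, reducing via $x^d=x$, and post-composing with an affine map never raises the combined degree. The problem is the reverse inclusion, which is where all the content of the lemma lives, and which your proposal does not actually establish. The torus-averaging step is plausible (note $d-1\equiv -1\not\equiv 0 \bmod p$, so the character projection is legitimate, modulo the $a_j=0$ versus $a_j=d-1$ ambiguity you would still need to break), but the descent from a top-degree monomial to all lower-degree monomials is left as a plan: you correctly identify the characteristic-$p$ obstruction and then defer its resolution to ``careful bookkeeping'' and ``substitutions adapted to the base-$p$ digits.'' A proof that ends by naming its own delicate heart without carrying it out has a genuine gap.

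Worse, the obstruction you flag is not a bookkeeping nuisance that cleverer substitutions will overcome: over a genuine prime-power field it is fatal to the statement as you (and the lemma) read it. Take $d=4$, $n=1$, $g(x)=x^2$. Then $(ax+b)^2=a^2x^2+b^2$ in characteristic $2$, so $L^{\mathbb{F}_4}_1\circ g\circ L^{\mathbb{F}_4}_1$ is the span of $\{1,x^2\}$ and never contains $x$, whereas $\Omega^{\mathbb{F}_4}_1(2)$ does; no descent from $x^2$ to $x$ exists. The reverse inclusion therefore only has a chance for $d=p$ prime, and there it holds for a much more elementary reason than your machinery suggests: every binomial (and multinomial) coefficient $\binom{k}{j}$ with $0\le j\le k\le p-1$ is nonzero $\bmod\ p$ (no carries in base $p$), so expanding $g(\ell_1(\mathbf{x}),\dots,\ell_n(\mathbf{x}))$ over varying affine forms produces all lower-degree monomials with nonvanishing coefficients, and a Vandermonde argument in the translation parameters separates them. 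This is exactly the fact the paper's own two-line argument (``producing other terms of less degree'') silently relies on; your proposal correctly senses that this is the crux but neither proves it in the prime case nor notices that it fails in the prime-power case.
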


\begin{proof}
We focus on the space of functions generated by linear pre-processing first. We find that $g \circ L^{\mathbb{F}_d}_n \subseteq_l \Omega^{\mathbb{F}_d}_n(\delta)$ as evaluating monomials $x^j$ on linear functions results in polynomials of degree at most $j$. As the definition of Eq.~(\ref{defn: linear subspaces}) already captures linear post-processing, we also have
\begin{equation*}
L^{\mathbb{F}_d}_1 \circ g \circ L^{\mathbb{F}_d}_n \subseteq_l L^{\mathbb{F}_d}_1 \circ \Omega^{\mathbb{F}_d}_n(\delta) = \Omega^{\mathbb{F}_d}_n(\delta)\,.
\end{equation*}

On the other hand we can always choose a linear function such that after evaluation on $x^j$ the resulting polynomial and $x^j$ are linearly independent (by producing other terms of less degree). Hence, taking linear combinations we can generate all polynomials of degree at most $j$,
\begin{equation*}
L^{\mathbb{F}_d}_1 \circ g \circ L^{\mathbb{F}_d}_n \supseteq_l \Omega^{\mathbb{F}_d}_n(\delta) \, .
\end{equation*}
\end{proof}

We conclude that the subspaces closed under linear pre- and post-processing are exactly $\Omega^{\mathbb{F}_d}_n(\delta)$ for $1 \leq \delta \leq n(d-1)$, where $\Omega^{\mathbb{F}_d}_n(1) = L^{\mathbb{F}_d}_n$.


\subsection{Local Universality}\label{sec: Local Universality}

In the qubit case, $\mathbb{F}_2 = \mathbb{Z}_2$, every local output function $m_k: \mathbb{Z}_2^n \longrightarrow \mathbb{Z}_2$ is linear. Put another way, with any given function $\phi \in \Omega_1^{\mathbb{F}_d}$ together with linear pre- and post-processing, we can realise arbitrary functions $m_k: \mathbb{Z}_2^n \longrightarrow \mathbb{Z}_2$.  We say (somewhat trivially) that linear functions are \textit{universal} for local qubit value assignments.

For qudits ($d = p^r$, $p$ prime, $r \in \mathbb{N}$ and corresponding finite field $\mathbb{F}_d$), local measurement outcomes are still maps of the form, $\phi_k: \mathbb{F}_d \longrightarrow \mathbb{F}_d$.  However, it is no longer true that all functions of this type are linear; any monomial of degree higher than one is clearly not.

Hence, whereas linear functions turned out to be universal (in the sense of being able to evaluate \textit{any} function $\phi_k: \mathbb{Z}_2 \longrightarrow \mathbb{Z}_2$), this is not the case for higher dimensional systems where there can be more than two measurement outcomes.  Therefore, we define a new property capturing this universality for both cases.

\vspace{0.1cm}

\begin{definition}[Local universality]\label{defn: local universality}
A $ld$-MBC is called locally universal if it implements all functions  $\phi:\mathbb{F}_d \longrightarrow \mathbb{F}_d \in \Omega^{\mathbb{F}_d}_1$.
\end{definition}

Note that the system size crucially enters this definition by means of the finite field $\mathbb{F}_d$, where $d$ denotes the qudit dimension.

With this definition, we are now able to identify the generalised connection between computational power and contextuality for qudit systems.  In a non-contextual hidden variable model, we again have local output functions $m_k: \mathbb{F}_d^n \longrightarrow \mathbb{F}_d$; these are not required to be linear in the qudit case.  Nonetheless, as we now show, this does not allow for the evaluation of any non-linear function.  For qubits, non-linear functions arose (by necessity) from crossterms, i.e., terms with combined degree greater than 1, such as the term $i_1i_2$ in the NAND gate in the example of Sec.~\ref{sec: Anders and Browne qubit example}.  For qudits, local universality allows us to implement some crossterms for finite fields with more than two elements; however, by Lem.~\ref{lm: linearly stable subspaces} we find that we can only implement a strict subset within the space of all polynomials,
\begin{equation*}
\Omega^{\mathbb{F}_d}_n(\delta) \ \subsetneq \ \Omega^{\mathbb{F}_d}_n \quad \forall \delta \in \{1,\ldots,d-1\} \, .
\end{equation*}
Local universality is a restriction on the space of (global) functions $o: \mathbb{F}_d^n \longrightarrow \mathbb{F}_d$, but a necessary requirement to maximise the computational power of correlations in $\phi_k: \mathbb{F}_d \longrightarrow \mathbb{F}_d$. In short, locally universal models are in the class $\Omega^{\mathbb{F}_d}_n(d-1)$ and,
\begin{equation*}
\Omega^{\mathbb{F}_d}_n(\delta) \ \subseteq \ \Omega^{\mathbb{F}_d}_n(d-1) \ \subsetneq \ \Omega^{\mathbb{F}_d}_n, \quad \delta \in \{1,\ldots,d-1\} .
\end{equation*}

The key observation is that for both qubits and qudits non-contextual models are at most locally universal, slightly more general we have:

\begin{theorem}\label{thm: generalised correspondence contextuality and computational complexity}
Consider a deterministic $ld$-MBC of prime power dimension $d = p^r$, $p$ prime, $r \in \mathbb{N}$. If the resource state is not strongly non-local then the output polynomial has maximal degree bounded by $d$, i.e.
\begin{equation}
o(\mathbf{i}) \in \Omega^{\mathbb{F}_d}_n(d-1).
\end{equation}
\end{theorem}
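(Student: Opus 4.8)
The plan is to unwind the hypothesis that the resource is not strongly non-local directly into the degree bound, using the finite-field machinery of Sec.~\ref{secLocalUniv}. By the definition of strong non-locality given above, ``not strongly non-local'' means precisely that a local value assignment exists: a rule assigning to each local observable $M_k(q_k)$ an outcome $\phi_k(q_k) \in \mathbb{F}_d$ (via the eigenvalue $\omega^{\phi_k(q_k)}$) that depends only on the site $k$ and its setting $q_k$, and that is jointly consistent with the fixed deterministic outcomes. First I would record that, because the setting handed to party $k$ is the linear function $q_k = f_k(\mathbf{i})$ with $f_k \in L^{\mathbb{F}_d}_n$, the induced local outcome as a function of the input factors as $m_k = \phi_k \circ f_k$, which is exactly the decomposition depicted in Fig.~\ref{fig: schematic}.

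Next I would use that the global outcome is constrained to be the corresponding linear combination of the local values. In the deterministic case the product of the local measurements fixes the computational output, so multiplicativity within each context (property $(ii)$ of a value assignment, specialised to the product observable) forces
\begin{equation*}
o(\mathbf{i}) = \sum_{k=1}^N \alpha_k\, m_k(\mathbf{i}) = \sum_{k=1}^N \alpha_k\, (\phi_k \circ f_k)(\mathbf{i}),
\end{equation*}
with the $\alpha_k \in \mathbb{F}_d$ read off from the linear post-processing matrix $Z$.

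The degree bound is then immediate. By the theorem on functions on finite fields each single-variable correlation function $\phi_k : \mathbb{F}_d \to \mathbb{F}_d$ is a polynomial of degree at most $d-1$; composing with the linear (hence degree-one) map $f_k$ cannot raise the combined degree beyond $d-1$, so $m_k \in \Omega^{\mathbb{F}_d}_n(d-1)$. Since $\Omega^{\mathbb{F}_d}_n(d-1)$ is a linear subspace closed under linear pre- and post-processing (Lem.~\ref{lm: linearly stable subspaces}), the linear combination $o = \sum_k \alpha_k m_k$ again lies in $\Omega^{\mathbb{F}_d}_n(d-1)$, which is the claim.

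I expect the genuine content --- and the main obstacle --- to lie not in the degree counting but in correctly extracting the factorisation $m_k = \phi_k \circ f_k$ from the bare existence of a local value assignment: one must argue that the assigned value at site $k$ depends on the input only through $q_k$ (the locality content), and that a \emph{single} such assignment meets the deterministic global constraint simultaneously for every input $\mathbf{i}$. A secondary point requiring care is the prime-power case $d = p^r$ with $r > 1$, where $\mathbb{Z}_d$ and the field $\mathbb{F}_d$ differ as additive groups; there the linear pre- and post-processing and the finite-field theorem must all be interpreted consistently over $\mathbb{F}_d$ for the composition argument to go through.
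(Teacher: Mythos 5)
Your argument is correct and is essentially the paper's own proof, which simply cites Lemma~\ref{lm: linearly stable subspaces} together with the fact that the correlation functions $\phi_k$ lie in $\Omega_1^{\mathbb{F}_d}(\delta)$ with $\delta\leq d-1$; you have merely filled in the steps the paper leaves implicit (the factorisation $m_k=\phi_k\circ f_k$ from Fig.~\ref{fig: schematic}, the finite-field theorem bounding $\deg\phi_k$, and closure under linear pre- and post-processing). The caveats you flag at the end are real but are exactly the points the paper also treats as part of its setup rather than of this proof.
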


\begin{proof}
The theorem is a direct consequence of Lem.~\ref{lm: linearly stable subspaces} and the fact that the correlation functions $\phi_k$ are restricted to the class $\Omega_1^{\mathbb{F}_d}(\delta)$. Here we use strong non-locality in the sense of \cite{abramskybrandenburger2011}.
\end{proof}

Note that Thm.~\ref{thm: generalised correspondence contextuality and computational complexity} is independent of the particular physical implementation of these function evaluations. In the next section we will look at the particular case of $ld$-MBQC.\\


\section{Non-Locality in MBQC}\label{sec: Non-locality in MBQC}

In this section we connect these results with $ld$-MBQC as outlined in Def.~\ref{defldMBQC}. We have the immediate corollary of Thm.~\ref{thm: generalised correspondence contextuality and computational complexity}.

\begin{corollary}\label{cor: generalised correspondence contextuality and computational complexity}
Let $M$ be a $ld$-MBQC for $d$ prime, which deterministically evaluates a function $o: \mathbb{Z}_d^n \longrightarrow \mathbb{Z}_d$. If $o(\mathbf{i})$ (when written as a polynomial) involves at least one term of the form $\prod_{j=1}^n \ x_j^{a_j}$, s.t. $\sum_{j=1}^n a_j \geq d$, then $M$ is strongly contextual, and specifically strongly non-local.
\end{corollary}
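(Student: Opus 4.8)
The plan is to read the corollary as the contrapositive of Theorem~\ref{thm: generalised correspondence contextuality and computational complexity}, specialised from a general $ld$-MBC to the $ld$-MBQC of Def.~\ref{defldMBQC}. This specialisation is legitimate for two reasons. First, any $ld$-MBQC is in particular an $ld$-MBC: the settings $q_k = f_k(\mathbf{i})$ depend $\mathbb{Z}_d$-linearly on the input, the output is the mod-$d$ sum of the local outcomes (a linear post-processing), and we work deterministically, so Def.~\ref{defldMBQC} is an instance of Def.~\ref{defldMBC}. Second, because $d$ is prime, $\mathbb{Z}_d = \mathbb{F}_d$ is a field, so the $\mathbb{Z}_d$-linear control processing is literally $\mathbb{F}_d$-linear and the $\mathbb{Z}_d$-valued output is a genuine function $\mathbb{F}_d^n \to \mathbb{F}_d$. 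This is exactly why the corollary asks for prime $d$ rather than the prime power $d = p^r$ of the theorem: only when $r=1$ do the ring $\mathbb{Z}_d$ and the field $\mathbb{F}_d$ coincide, so that the linear side-processing and the polynomial filtration $\Omega^{\mathbb{F}_d}_n(\delta)$ refer to the same algebraic object.

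With this identification I would proceed in three steps. First, I would use the theorem on functions on finite fields to fix $o$ in its unique reduced representative, with every partial degree at most $d-1$; only in this representative is the combined-degree hypothesis well-posed, since relations like $x_j^{d} = x_j$ can otherwise lower the effective degree of an unreduced monomial. In the reduced form, the presence of a monomial $\prod_{j} x_j^{a_j}$ with $\sum_j a_j \geq d$ is precisely the statement $o \notin \Omega^{\mathbb{F}_d}_n(d-1)$, by the definition in Eq.~(\ref{defn: linear subspaces}). (Note that this already forces $n \geq 2$: a single variable of partial degree at most $d-1$ can never reach combined degree $d$, so the obstruction genuinely lives in a crossterm between distinct parties.) Second, I would apply the contrapositive of Theorem~\ref{thm: generalised correspondence contextuality and computational complexity}: since $o \notin \Omega^{\mathbb{F}_d}_n(d-1)$, the resource state must be strongly non-local. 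Third, I would upgrade strong non-locality to strong contextuality.

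For the third step I would invoke the relationship between the two notions established when they were introduced. A non-contextual value assignment $s$ is defined on all observables and is multiplicative within each context; applied to the context $\calC(\mathbf{i})$, in which the global observable $M(\mathbf{i})$ is the product of the local $M_k(f_k(\mathbf{i}))$, multiplicativity forces $s(M(\mathbf{i})) = \prod_k s(M_k(f_k(\mathbf{i})))$. Hence any NCVA restricts to a consistent local value assignment, so existence of an NCVA entails existence of a local value assignment. Taking contrapositives, the absence of a local value assignment---strong non-locality---entails the absence of an NCVA, i.e.\ strong contextuality; this is the sense in which strong non-locality is the sharper (``specifically'') conclusion, and combining the steps the hypothesis yields both claims.

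The heavy lifting---the degree bound itself---is already supplied by Theorem~\ref{thm: generalised correspondence contextuality and computational complexity} through Lem.~\ref{lm: linearly stable subspaces}, so I expect no real obstacle there. The only points demanding care are bookkeeping ones: reading the combined-degree condition off the reduced polynomial, keeping the $\mathbb{Z}_d$/$\mathbb{F}_d$ identification (valid precisely for prime $d$) so that linear control and the filtration $\Omega^{\mathbb{F}_d}_n(\delta)$ speak about the same functions, and getting the direction of the implication from strong non-locality to strong contextuality the right way round rather than backwards.
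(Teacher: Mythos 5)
Your proposal is correct and follows essentially the same route as the paper, which presents the corollary as an immediate consequence of Thm.~\ref{thm: generalised correspondence contextuality and computational complexity} (contrapositive, specialised to the quantum case with $d$ prime so that $\mathbb{Z}_d=\mathbb{F}_d$). The details you supply---reducing $o$ to its canonical representative so the degree condition reads as $o\notin\Omega^{\mathbb{F}_d}_n(d-1)$, and deriving strong contextuality from strong non-locality via the restriction of an NCVA to a local value assignment---are exactly the steps the paper leaves implicit.
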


We note that Cor.~\ref{cor: generalised correspondence contextuality and computational complexity} was first reported in Ref.~\cite{HobanWallmanBrowne2011}. In order to see how tight this bound is, we check whether general $ld$-MBQC instances, and specifically non-contextual ones, satisfy local universality. For qubits, this is trivially the case. In this section, we prove that local universality rather than linearity also generalises to $ld$-MBQCs on qudits of prime dimension.

We give a brief note regarding our restriction to prime values of $d$.  Recall that in the setting of $ld$-MBQC, we are dealing with rings of integers modulo $d$. For $d$ a prime number, $\mathbb{Z}_d$ is in fact a field, but not all finite fields arise this way. Thm.~\ref{thm: generalised correspondence contextuality and computational complexity} holds for all finite fields $\mathbb{F}_d$, i.e., also fields of order a prime power $d = p^r$.  However, $\mathbb{F}_d \not\cong \mathbb{Z}_d$ whenever $r \geq 2$, as the latter is a ring but not a field. Functions on unital, commutative rings are not polynomials in general.  For these reasons, we will restrict to fields $\mathbb{Z}_p$ in this section, but we note that our results allow for partial inferences in the case of rings $\mathbb{Z}_d$ with $d$ odd as well.


\subsection{Local Universality in MBQC}\label{sec: Local Universality in MBQC}

We show that functions arising from $ld$-MBQCs with single-qudit Clifford unitaries as the classical control operations are locally universal. In fact, this is true even under the restriction to stabilizer states.

\begin{theorem}\label{thm: local universality in MBQC}
The stabilizer subtheory in $ld$-MBQC with $d$ prime is locally universal.
\end{theorem}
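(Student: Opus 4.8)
The plan is to exploit that, for $d = p$ prime, $\mathbb{F}_p = \mathbb{Z}_p$ is a field, so by the theorem on functions on finite fields the space $\Omega_1^{\mathbb{F}_p}$ of all correlation functions $\phi : \mathbb{F}_p \to \mathbb{F}_p$ is the $p$-dimensional $\mathbb{F}_p$-vector space spanned by $1, q, q^2, \dots, q^{p-1}$. Since the $ld$-MBC framework permits $\mathbb{Z}_p$-linear post-processing and allows the same setting $q = f(\mathbf{i})$ to be routed to several qudits, the family of implementable functions of the effective input $q$ is closed under $\mathbb{F}_p$-linear combinations. It therefore suffices to realise, using correlation functions of individual stabilizer qudits, a \emph{basis} of $\Omega_1^{\mathbb{F}_p}$; linear post-processing then assembles any target $\phi$.

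For the building blocks I would appeal directly to the constructions already in hand. Example~1 (conjugation of a Pauli by a Pauli) yields the linear correlation functions $q \mapsto cq$, and in particular $q \mapsto q$. Example~3 (evolution by the symplectic Clifford $M_u$ on a $Z$-eigenstate) yields the exponential correlation function $q \mapsto u^{-q}$; letting $u$ range over $\mathbb{F}_p^\times$ produces $q \mapsto \lambda^q$ for every $\lambda \in \mathbb{F}_p^\times$, with $\lambda = 1$ giving the constant $1$. All of these lie within the stabilizer subtheory, using only Pauli measurements, Clifford control unitaries and stabilizer resource states.

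The heart of the argument is then the claim that $\{q\} \cup \{\lambda^q \mid \lambda \in \mathbb{F}_p^\times\}$ is a basis of $\Omega_1^{\mathbb{F}_p}$. First, the $p-1$ exponentials are linearly independent: evaluating $\sum_\lambda c_\lambda \lambda^q = 0$ at $q = 0, 1, \dots, p-2$ gives a Vandermonde system in the distinct $\lambda \in \mathbb{F}_p^\times$, forcing all $c_\lambda = 0$. By Fermat's little theorem each exponential satisfies $\lambda^0 = \lambda^{p-1} = 1$, so their span lies inside $E := \{f \mid f(0) = f(p-1)\}$; since $\dim_{\mathbb{F}_p} E = p-1$ this containment is an equality. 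Finally the linear function $q$ takes the value $0$ at $q=0$ and $-1 \neq 0$ at $q = p-1$, so $q \notin E$, and hence $E \oplus \langle q \rangle_l = \Omega_1^{\mathbb{F}_p}$ exhausts all $p$ dimensions. (At $p=2$ this degenerates correctly: $E$ is the constants and $q$ supplies the one remaining dimension, recovering the trivial qubit case.)

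Combining these, every $\phi \in \Omega_1^{\mathbb{F}_p}$ is an $\mathbb{F}_p$-linear combination of realisable single-qudit correlation functions, which the linear post-processing of the MBC implements, establishing local universality. The main obstacle I anticipate is conceptual rather than computational: one must be careful that universality is achieved through the linear closure of the realisable correlation functions (several qudits fed the same setting, then summed with post-processing coefficients), since an \emph{individual} deterministic stabilizer qudit appears confined to a single ``mode''—linear, or a single exponential $\lambda^q$—and cannot by itself produce, say, $q^3$. The key quantitative input is the exact identification $\mathrm{span}\{\lambda^q\} = E$ together with the single extra dimension supplied by the linear term; verifying the Vandermonde independence and the dimension count of $E$ is where the real content sits, and Example~2 (the quadratic $q^2$, which also lies outside $E$) could substitute for the linear term if one preferred.
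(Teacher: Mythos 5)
Your proposal is correct, and it rests on the same physical ingredient as the paper's proof of Theorem~\ref{thm: local universality in MBQC}: the exponential correlation functions $q \mapsto u^{-q}$ produced by the symplectic Clifford $M_u$ of Example~3, realised on several stabilizer qudits that all receive the same setting and are then combined by $\mathbb{Z}_p$-linear post-processing. The assembly step, however, is genuinely different. The paper (Appendix~\ref{secAppB}) proceeds constructively: it sums all $p-1$ exponentials to obtain $\sigma_p(x)$, the indicator of $\{0,p-1\}$ (via the vanishing of sums over proper subgroups of $\mathbb{F}_p^\times$), then breaks the residual degeneracy between $x=0$ and $x=p-1$ not by adding a new correlation function but by linearly \emph{pre}-processing the input, $\delta(x) = 1 + \tfrac{p-1}{2}\bigl(1 + \sum_{k=1}^{p-1}\sigma_p(kx)\bigr)$, and finally expands an arbitrary $m$ in shifted delta functions, giving an explicit qudit count of $p(p-1)^2$ (improved to $2p$ for general odd $d$ using $u=d-1$). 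You instead give a basis/dimension-count argument: Vandermonde independence shows the $p-1$ exponentials span exactly the hyperplane $E = \{\phi : \phi(0)=\phi(p-1)\}$, and a single additional realisable function outside $E$ (the Pauli-controlled linear term of Example~1, or equally the quadratic of Example~2) completes a basis of the $p$-dimensional space $\Omega^{\mathbb{F}_p}_1$. Your route is more conceptual and more economical (only $p$ qudits), and it makes transparent exactly which subspace the exponentials alone can reach; the paper's route is more explicit and is the version that extends to arbitrary odd $d$. The one point worth flagging is that you correctly identified the subtlety that an individual qudit only supplies a single ``mode'' and that universality lives in the linear closure over several qudits---this is precisely how the paper's Definition~\ref{defn: local universality} is meant to be read, as its own proof also consumes many qudits per target function.
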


\begin{proof}
See Appendix~\ref{secAppB}.
\end{proof}

Note that for general rings $\mathbb{Z}_d$ and $d$ odd, the function space $\tilde{\Omega}_n^{\mathbb{Z}_d} ∶= \{g:\mathbb{Z}^n_d \rightarrow \mathbb{Z}_d\} \supsetneq \mathbb{Z}_d[x_1,⋯,x_n]$ contains non-polynomial functions. Nevertheless, the proof of Thm.~\ref{thm: local universality in MBQC} extends to rings $\mathbb{Z}_d$ for $d$ odd, hence, Cor.~\ref{cor: generalised correspondence contextuality and computational complexity} remains true in those cases whenever the output function is a polynomial.\\

In Sec.~\ref{sec: Restricting to linear functions}, we proved that using Pauli unitaries for classical control is not powerful enough to yield non-linear functions, yet within the full stabilizer subtheory of Clifford unitaries the MBQCs allow for non-linear output functions. Thm.~\ref{thm: local universality in MBQC} shows that Clifford unitaries already generate all non-linear functions, $\phi: \mathbb{Z}_d \longrightarrow \mathbb{Z}_d$, in particular, the bound in Cor.~\ref{cor: generalised correspondence contextuality and computational complexity} is tight.

Note also that the function space accessible using the stabilizer subtheory contains at least $\Omega_n^{\mathbb{Z}_d}(d-1)$ by Thm.~\ref{thm: local universality in MBQC}. On the other hand the existence of a non-negative discrete Wigner function restricts the stabilizer subtheory to local universality.

\begin{corollary}\label{cor: stabilizer subtheory}
The stabilizer subtheory in $ld$-MBQC with $d$ prime is strictly bounded by local universality, i.e. its computation class is $\Omega_n^{\mathbb{Z}_d}(d-1)$.
\end{corollary}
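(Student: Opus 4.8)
The plan is to establish the claimed equality by proving the two inclusions separately: that every function in $\Omega_n^{\mathbb{Z}_d}(d-1)$ is realisable within the stabilizer subtheory (achievability), and that no output function outside $\Omega_n^{\mathbb{Z}_d}(d-1)$ can occur (the bound). The two directions draw on complementary ingredients already assembled above: achievability from local universality (Thm.~\ref{thm: local universality in MBQC}) together with the closure lemma (Lem.~\ref{lm: linearly stable subspaces}), and the bound from non-contextuality of the odd-$d$ stabilizer subtheory together with Thm.~\ref{thm: generalised correspondence contextuality and computational complexity}.

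For achievability, I would note that Thm.~\ref{thm: local universality in MBQC} lets me implement, at each site $k$, an arbitrary correlation function $\phi_k \in \Omega_1^{\mathbb{Z}_d}$, in particular the maximal-degree monomial $\phi_k(x) = x^{d-1}$ of degree $d-1$. Precomposing with $\mathbb{Z}_d$-linear pre-processing $f_k \in L_n^{\mathbb{Z}_d}$ and forming the output $o(\mathbf{i}) = \sum_k \alpha_k\, \phi_k(f_k(\mathbf{i}))$ realises exactly the linear pre- and post-processing considered in Lem.~\ref{lm: linearly stable subspaces}. With $\delta = d-1$ that lemma gives $L_1^{\mathbb{Z}_d} \circ \phi \circ L_n^{\mathbb{Z}_d} = \Omega_n^{\mathbb{Z}_d}(d-1)$, and the construction in its proof reaches an arbitrary target monomial as a $\mathbb{Z}_d$-linear combination of $\phi$ evaluated on several distinct linear functions---which is precisely the multi-site sum $\sum_k \alpha_k(\phi_k\circ f_k)$ available here. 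Hence every polynomial of combined degree at most $d-1$ is reproduced, and $\Omega_n^{\mathbb{Z}_d}(d-1)$ is contained in the stabilizer computation class.

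For the bound, the key input is that for $d$ an odd prime the stabilizer subtheory admits a non-negative discrete Wigner function~\cite{Gross2006,deBeaudrap2013}. I would use this to produce a genuine probability distribution over phase-space points, which is exactly a non-contextual value assignment to the Weyl observables consistent with any stabilizer resource state; in particular the subtheory is not strongly contextual, and a fortiori not strongly non-local in the sense of~\cite{abramskybrandenburger2011}. Theorem~\ref{thm: generalised correspondence contextuality and computational complexity} then applies and forces $o(\mathbf{i}) \in \Omega_n^{\mathbb{Z}_d}(d-1)$ for every deterministic stabilizer $ld$-MBQC. Combined with the previous paragraph this yields the stated equality, and the strictness $\Omega_n^{\mathbb{Z}_d}(d-1) \subsetneq \Omega_n^{\mathbb{Z}_d}$ is immediate from Lem.~\ref{lm: linearly stable subspaces} whenever $n \geq 2$.

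The main obstacle is the bound direction, and specifically the conceptual step translating non-negativity of the discrete Wigner function into the precise hypothesis (``not strongly non-local'') required by Thm.~\ref{thm: generalised correspondence contextuality and computational complexity}. This is also where the restriction to $d$ odd enters essentially: for $d=2$ the qubit stabilizer subtheory is state-independently contextual, its Wigner function fails to be non-negative, and indeed the Anders--Browne NAND computation on a GHZ stabilizer state already escapes $\Omega_n^{\mathbb{Z}_2}(1)$, so the tight equality can only be expected once non-contextuality is available. The achievability direction, by contrast, is a routine consequence of Thm.~\ref{thm: local universality in MBQC} and Lem.~\ref{lm: linearly stable subspaces}.
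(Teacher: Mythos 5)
Your proposal is correct and follows essentially the same route as the paper: the upper bound comes from the non-negativity of the discrete Wigner function for the odd-prime stabilizer subtheory (hence no strong non-locality) combined with Thm.~\ref{thm: generalised correspondence contextuality and computational complexity}/Cor.~\ref{cor: generalised correspondence contextuality and computational complexity}, and the achievability comes from Thm.~\ref{thm: local universality in MBQC} together with Lem.~\ref{lm: linearly stable subspaces}, exactly as the paper asserts in the paragraph preceding the corollary. Your explicit remarks that the Wigner-function step requires $d$ odd and that strictness of the inclusion needs $n\geq 2$ are accurate refinements of what the paper leaves implicit.
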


\begin{proof}
The discrete Wigner function provides a non-contextual description for any implementation of the stabilizer subtheory~\cite{Gross2006}. Hence, such implementations are not strongly contextual. Yet any implementation evaluating a function $o(\mathbf{i}) \notin \Omega_n^{\mathbb{Z}_d}(d-1)$ is strongly contextual by Cor.~\ref{cor: generalised correspondence contextuality and computational complexity}.
\end{proof}

In particular, we cannot harness any computational power from non-local correlations using only stabilizer states.\\

Up until now we have focussed on deterministic MBQCs.  We note however, in analogy to the qubit case, Cor.~\ref{cor: generalised correspondence contextuality and computational complexity} extends to a probabilistic version for success probabilities within a finite interval around $p_S = 1$.  We provide the details of this generalisation in Appendix~\ref{secAppC}.


\subsection{Scaling under Composition}\label{sec: Scaling under Composition}

At the core of the framework of $ld$-MBC is the identification of locally-measureable systems, and the power of correlations between these systems. Within this framework, we make a distinction between contextuality `within' a local system and the non-locality between them. So far we have shown how non-locality leads to correlations that can be used as a resource for computational power. To further illustrate this connection, we consider (somewhat reversely) how computational power constrains the composition of systems. To this end we assume a locally universal $ld$-MBC with finite field $\mathbb{F}_d$, rather than restricting to $\mathbb{Z}_d$ in $ld$-MBQC.

From the results on linearly invariant subspaces in Sec.~\ref{secLocalUniv}, we know that any restriction on functional computability lies in the maximal degree of the function's monomials. Under the constraint of locality, functions can be computed locally only and combined after, leading to $\Omega_n^{\mathbb{F}_d}(d-1)$ in Thm.~\ref{thm: generalised correspondence contextuality and computational complexity}. However, when taken as a single system the individual systems can be combined first and functions computed after. Now assume we are given a device that allows us to perform non-local (correlated) measurements and thus to combine systems first before computing. It is clear that with such a device we can again implement arbitrary functions $g: \mathbb{F}_d^n \longrightarrow \mathbb{F}_d$. Given these preliminary considerations, we consider the two scenarios in which dimensions of subsystems, i.e. the number of different local outcomes (cf. $\mathbb{Z}_l$ in Sec.~\ref{sec: The Setup}), were to combine additively vs multiplicatively under composition.

Take a composite system of dimension $d=p^r$, $p$ prime, $r \in \mathbb{N}$. Any decomposition into subsystems $d = \sum_{m=1}^M l_m$ allows for a hypothetical input space of size $D = \prod_{m=1}^M l_m \neq d$, in general. For instance, consider a \textit{maximal} decomposition into $\frac{d-1}{2}$ subsystems of size $2$ and one of size $3$. Applying our (hypothetical) correlation device we combine subsystems and find an input space of size $D := 3 \cdot 2^{\frac{d-1}{2}} \gneq d$, clearly larger than the original space. Evidently, $\Omega^{\mathbb{F}_D}_n \neq \Omega^{\mathbb{F}_{d}}_n$, we conclude that under additive composition of systems there would be a difference in viewing the system as composite or as sum of its parts. On the contrary, if we assume that physical subsystems combine multiplicatively with respect to their dimensions, we find upon otherwise similar reasoning, $D := p^r = d$ which exactly reproduces the full space of functions $g: \mathbb{F}^n_d \longrightarrow \mathbb{F}_d$ as required.

Hence, under the constraint that function spaces corresponding to a system are the same whether viewed as composite or in terms of its subsystems, we find that system dimensions need to scale multiplicatively. The tensor product is multiplicative in the dimension of the respective Hilbert spaces, quantum systems thus fit into the above argument. However, note that we have not assumed Hilbert spaces or any other related structures. The same argument therefore applies to all correlated systems independent of their physical implementation under local universality.


\subsection{Temporal Ordering}\label{sec: Temporal Ordering}

In this paper we have restricted the discussion to temporally flat MBQCs.  That is, the measurement settings for the $k$th qudit depend only on the input $\mathbf{i}$ to the control computer and not previous measurement outcomes from qudits $k'<k$. Temporal flatness turns out to be a crucial requirement, as we have seen that non-locality is the key resource behind computational speed-up. Allowing temporal ordering means to allow for (one-way) communication between the locally separated qudit sites. The constraints in Thm.~\ref{thm: generalised correspondence contextuality and computational complexity} and Cor.~\ref{cor: generalised correspondence contextuality and computational complexity} can be understood as Bell-like constraints on correlations for which communication is naturally excluded.

In fact, if we allowed for temporal ordering, we would have access to recursive function calls. It is straightforward to show that a classification of function spaces under iterative function calls with linear processing only has two stable subspaces, the entire space $\Omega^{\mathbb{F}_d}_n$ and the space of linear functions $L^{\mathbb{F}_d}_n$. This is why temporal ordering can be allowed in the qubit case. On the other hand, it means that any non-linear function $\phi: \mathbb{F}_d \longrightarrow \mathbb{F}_d$ elevates the control computer to arbitrary functions in $\Omega^{\mathbb{F}_d}_n$ under linear processing.

In particular, we can obtain an upper bound on the polynomial degree for non-contextual $ld$-MBQCs even in the temporally ordered case, but it will be larger in general. We first define a directed graph $\mathcal{G}$ that contains all of the information of the temporal ordering. Namely, we have a vertex for each party (qudit), and an edge whenever the choice of measurement on one party depends on the measurement result of another. For the computation to be executable, we require that there are no cycles in the graph. As such, the graph for the temporally flat case contains no edges.

To find an upper bound on the degree of a computed polynomial in a non-contextual $ld$-MBQC, we find the longest (directed) path $l$ in the graph $\mathcal{G}$. Let the number of vertices in this path be denoted $|l|$. Then by composing a degree $d-1$ polynomial $|l|$ times, we obtain a polynomial of degree $(d-1)^{|l|}$. Whenever an $ld$-MBQC is evaluating a function that is a polynomial with degree greater than $(d-1)^{|l|}$, we have a proof of strong contextuality. Thus with temporal ordering, it is more difficult to find proofs of strong contextuality within the setting of $ld$-MBQC.


\section{Discussion}\label{sec: Discussion}

In summary, we have placed a bound on the space of output functions of general MBQCs if the underlying system is non-contextual.  This generalises Raussendorf's result~\cite{Raussendorf2009} for qubits.  Nontrivial MBQCs on qudits do not directly make use of (local) contextuality (in quantum systems of dimension at least three), but instead harness a type of global contextuality, namely strong non-locality. In fact, Thm.~\ref{thm: generalised correspondence contextuality and computational complexity} can be understood as a deterministic version of Bell's Theorem restricted to MBQC: assuming local hidden variables, $m_k(x)$, and local (single qudit) measurements, $\Omega^{\mathbb{F}_d}_n(d-1)$ is already maximal.

Our results highlight strong non-locality as the crucial ingredient to Thm.~\ref{thm: generalised correspondence contextuality and computational complexity}, and show how non-local correlations between spatially separated subsystems can boost the computational power of the classical control computer.

We conclude with a number of directions for further research in this area. 

A key open question is whether our results (specifically, Cor.~\ref{cor: generalised correspondence contextuality and computational complexity}) generalise to qudits with arbitrary non-prime-power dimensions. Most of our proofs rely on the theorem regarding polynomials in finite fields, which says that all functions over finite fields are  polynomials.  However, this result no longer holds true if $d \neq p^r$ for $p$ prime and $r \in \mathbb{N}$, and so a generalisation would need to consider more general functions.

We have seen that the standard framework for MBQC in quantum theory is locally universal, and due to non-local correlations they can even compute functions outside of $\Omega^{\mathbb{F}_d}_n(d-1)$. However, it is not clear whether local universality allows us to compute all functions in $\Omega^{\mathbb{F}_d}_n$.  It remains to determine if MBQCs of this form in quantum theory can saturate the space of functions (perhaps focussing on prime power dimensions).  A more detailed analysis of the connections between the computational results and multi-party Bell inequalities as derived in Ref.~\cite{HobanWallmanBrowne2011} provides a natural starting point.

We focussed on generalised Pauli-like measurements with Clifford control unitaries as our framework, as this is standard for quantum computation, and showed that it is locally universal.  With a different choice of control unitaries, it might be possible to exclude local universality. In this case, one may be able to obtain a proof of contextuality with the computation of polynomial with degree lower than $d-1$.

Contextuality has recently been related to cohomology~\cite{AbramskyBarbosaMansfield2012,raussendorf2016cohomological,okay2017topological,OkayTyhurstRaussendorf2018,Caru2018}. It would be a worthwhile goal to pursue such classifications further and bring them in contact with our result on computational complexity.  It would be interesting to understand the relationship between polynomial degree in our setting, and the nontriviality of a certain cocycle within the cohomological frameworks of Refs.~\cite{raussendorf2016cohomological,okay2017topological}.

\begin{acknowledgments}
We thank Robert Spekkens and Joel Wallman for discussions. This work is supported by the Australian Research Council (ARC) via the Centre of Excellence in Engineered Quantum Systems (EQuS) project number CE170100009 and through a studentship in the Centre for Doctoral Training on Controlled Quantum Dynamics at Imperial College London funded by the EPSRC. SR also acknowledges support from the Australian Institute for Nanoscale Science and Technology Postgraduate Scholarship (John Makepeace Bennett Gift).
\end{acknowledgments}

\newpage

\appendix


\section{Proof of Theorem on Functions on Finite Fields}\label{secAppA}

Let $g: \mathbb{F}_d^n \rightarrow \mathbb{F}_d$. We first show how to obtain the $\delta$-function from a polynomial with partial degree less than or equal to $d-1$. For any $y \in  \mathbb{F}_d^n$ consider the delta function defined as 
\begin{equation}
\delta(x-y) = \begin{cases}
1 \quad \text{if } x = y \\
0 \quad \text{otherwise.}
\end{cases}
\end{equation}
Then it holds that 
\begin{equation}\label{eqPolyDelta}
\delta(x-y) = \prod_{i=1}^n (1 - (x_i-y_i)^{d-1}),
\end{equation}
which follows from Fermat's little theorem for $d$ prime and for general finite fields as every element in the multiplicative group $\mathbb{F}_d^\times$ has order a divisor of $d-1$. We can express any function $g: \mathbb{F}_d^n \rightarrow \mathbb{F}_d$ as a linear combination of delta functions,
\begin{equation}
g(x) = \sum_{y \in \mathbb{F}_d^n} \delta (x - y) \phi(y).
\end{equation}
which along with Eq.~(\ref{eqPolyDelta}), completes the proof.


\section{Proof of Theorem 2}\label{secAppB}

\begin{proof}[Proof of Thm.~\ref{thm: local universality in MBQC}]
Lem.~\ref{lm: linearly stable subspaces} says a polynomial $\phi$ of degree $|\phi| = l$ generates all functions of less or equal degree via linear pre- and post-processing. The higher its degree the more valuable a function becomes when viewed as a resource for computational tasks under the scheme in Fig.~\ref{fig: schematic}. It is thus sufficient to show that we can implement a function of highest degree within the limits of $ld$-MBQC. Here we will make use of the Clifford control unitary explored in the example of Sec.~\ref{sec: Example 3:  more non-linear functions}.

Instead of proving the existence of a function of degree $p-1$, $p$ prime we take a slightly different route and consider the $\delta(x)$-function over the finite field $\mathbb{Z}_p$.
\begin{table}[h!]
\centering
\begin{tabular}{c|cccc}
  & 0 & 1 & $\cdots$ & $p-1$ \\
\hline
$\delta(x)$ & 1 & 0 & $\cdots$ & 0
\end{tabular}
\end{table}
It is clear that with $\delta(x)$ as a resource we can generate in particular any polynomial $m \in \mathbb{Z}_p[x]$ via linear combinations of the form,
\begin{equation*}
m(x) = \sum_{j=0}^{p-1} m_j\delta(x-j) \quad ,
\end{equation*}
where $m_j \in \mathbb{Z}_p$ are constants specific to $m$.

In all finite fields the sum of elements within a subgroup of the multiplicative group $\mathbb{F}_d^\times$ vanishes as,
\begin{equation*}
S := \sum_{g \in G} g = \sum_{h^{-1}g \in G} g = h \sum_{g' \in G} g' = hS.
\end{equation*}
Observe, that the exponential function $u^x$ maps the additive group $\mathbb{Z}_p$ into its multiplicative group $\mathbb{Z}_p^\times$. We assume $u$ to be a primitive element of $\mathbb{Z}_p^\times$, then $u^x$ is surjective and we obtain non-trivial subgroups $\langle u^x \rangle \subseteq \mathbb{Z}_p^\times$, for all $x \neq 0,p-1$. One computes,
\begin{align}\label{eq: sum over exponents}
\sum_{k=1}^{p-1} (u^x)^k &= \frac{|\mathbb{Z}_p^\times|}{|\langle u^x \rangle|} \ \sum_{k=0}^{|\langle u^x \rangle| - 1} (u^x)^k \nonumber \\
&= \frac{p-1}{|\langle u^x \rangle|} \ \sum_{g \in \langle u^x \rangle} g = \ \begin{cases}
p-1 &\mathrm{if} \ x=0,p-1, \\
0 &\mathrm{otherwise}, \\
\end{cases}
\end{align}
as the sum either runs over all non-zero group elements in $\mathbb{Z}_p$ in case $u^x$ is a primitive of $\mathbb{Z}_p^\times$, or over the subgroup $\langle u^x \rangle \subseteq \mathbb{Z}_p^\times$ for $x \neq 0,p-1$.

As shown in the example of Sec.~\ref{sec: Example 3:  more non-linear functions}, we can obtain this exponential function using the symplectic Clifford unitary $M_u$ of Eq.~(\ref{eq:Mr}). Using Eq.~(\ref{eq: sum over exponents}) and the exponential function constructed in Eq.~(\ref{eq: construct exponential}) with linear pre- and post-processing on $p-1$ qudits we implement the function,
\begin{equation}\label{eq: construct delta}
\sigma_p(x) := (p-1)^{-1} \sum_{l=1}^{p-1} \ (u^x)^l =
\begin{cases}
1 \ &\mathrm{if} \ x = 0,p-1, \\
0 \ &\mathrm{otherwise}.
\end{cases} 
\end{equation}
For $p=5$ and $r=2$, the exponentials and their sum in Eq.~(\ref{eq: construct delta}) are explicitly given in the table below
\begin{table}[h!]
\centering
\begin{tabular}{c|ccccc}
$\mathit{x}$ & 0 & 1 & 2 & 3 & 4 \\
\hline
$2^{x}$ & 1 & 2 & 4 & 3 & 1 \\
$2^{2x}$ & 1 & 4 & 1 & 4 & 1 \\
$2^{3x}$ & 1 & 3 & 4 & 2 & 1 \\
$2^{4x}$ & 1 & 1 & 1 & 1 & 1 \\
\hline
$\sigma_5(x)$ & 1 & 0 & 0 & 0 & 1 \\
\end{tabular}
\label{tab: p=5}
\end{table}

Finally, we use another linear operation to recover $\delta(x)$ from $\sigma_p(x)$,
\begin{equation}
\delta(x) = 1+ \frac{p-1}{2}(1+ \sum_{k=1}^{p-1} \sigma_p(kx)).
\end{equation}
In summary, we construct any function $m: \mathbb{Z}_p \longrightarrow \mathbb{Z}_p$ using the Clifford unitary $M_u$ on $p(p-1)^2$ qudits under linear pre- and post-processing (without temporal ordering),
\begin{align*}
m(x) &= \sum_{j=0}^{p-1} m_j \left(1+ \frac{p-1}{2}\left(1+ \sum_{k=1}^{p-1} \left((p-1)^{-1} \sum_{l=1}^{p-1} \ u^{lk(x-j)}\right)\right)\right) \nonumber \\
&= \frac{1}{2}\sum_{j=0}^{p-1} m_j \left(1 + \sum_{k=1}^{p-1} \sum_{l=1}^{p-1} \ u^{lk(x-j)}\right).\\
\end{align*}

Note that the proof holds for general finite fields as long as we can implement the exponential function, yet Eq.~(\ref{eq: construct exponential}) relies on the explicit $ld$-MBQC implementation and constrains to $\mathbb{Z}_p$ for $p$ prime. 

In the latter case we can improve the number of qudits needed to implement the $\delta$-function and generalise to qudits of arbitrary odd dimension $d$. Note that in any ring of integers modulo $d$,
\begin{equation*}
a^2 = 1 \ \mathrm{mod} \ d \Longleftrightarrow (a-1)(a+1) = 0 \ \mathrm{mod} \ d,
\end{equation*}
has solutions $a = \pm 1$. We can use this to construct,
\begin{equation*}
\delta(x) = \frac{1}{2} \sum_{k=\pm 1} (d-1)^{kx} =
\begin{cases}
1 &\mathrm{if} \ x=0 \\
0 & \mathrm{otherwise}
\end{cases}
\end{equation*}
and thus arbitrary functions $m: \mathbb{Z}_d \longrightarrow \mathbb{Z}_d$,
\begin{equation*}
m(x) = \sum_{j=0}^{p-1} m_j \left( \frac{1}{2} \sum_{k=\pm 1} (d-1)^{k(x-j)} \right),
\end{equation*}
using the Clifford unitary $M_{d-1}$ on $2p$ qudits under linear pre- and post-processing (without temporal ordering).
\end{proof}


\section{The Probabilistic Case}\label{secAppC}

Cor.~\ref{cor: generalised correspondence contextuality and computational complexity} holds in the deterministic case only, however, the result remains valid at least in a small neighbourhood around the success probability $p_S = 1$, where
\begin{equation*}
p_S := \min_{\mathbf{i} \in \mathbb{Z}_d^n} \ \mathrm{Prob}(\tau(\mathbf{i}) = o(\mathbf{i})),
\end{equation*}
$\tau(\mathbf{i})$ the factual result of a computation with input string $\mathbf{i} \in \mathbb{Z}^n_d$. This is an immediate generalisation of the analogous result for qubits in \cite{Raussendorf2009}, with some slight adjustments. First, we define $\Delta$ as follows,
\begin{align*}
\Delta: \mathbb{Z}_d &\longrightarrow \mathbb{Z}_{\frac{(d-1)}{2}} \nonumber \\
\Delta (q) &:= \begin{cases} q \ \ \ \mathrm{if} \ q < -q \\ -q \ \mathrm{if} \ q > -q \end{cases}.
\end{align*}

Next, we define the distance $\nu$ of a function $o: \mathbb{Z}^n_d \rightarrow \mathbb{Z}_d$ to the closest polynomial function in $\Omega^{\mathbb{Z}_d}_n(d-1)$,
\begin{equation}\label{eq: minimal function distance}
\nu(o) := \min_{p \in \Omega^{\mathbb{Z}_d}_n(d-1)} \ \sum_{\mathbf{i} \in \mathbb{Z}_d^n} \Delta (o(\mathbf{i}) - p(\mathbf{i})),
\end{equation}
where the minimum is taken over all polynomial functions of degree at most $d-1$. Generalising the qubit case \cite{Raussendorf2009} we observe strong non-locality for,
\begin{equation}
p_S > 1 - \frac{1}{d^n}\frac{2\nu(o)}{(d-1)}.\\
\end{equation}

In \cite{AbramskyBarbosaMansfield2017} the authors derive a stronger result based on the average success probability,
\begin{equation*}
\bar{p}_S := \frac{1}{d^n} \sum_{\mathbf{i} \in \mathbb{Z}_d^n} \ \mathrm{Prob}(\tau(\mathbf{i}) = o(\mathbf{i})).
\end{equation*}
Decomposing the empirical model $e$ (a set of probability distributions for outcomes and experiments corresponding to different contexts, cf. \cite{AbramskyBrandenburger2011}) underlying the $ld$-MBQC into a non-contextual and strongly contextual part,
\begin{equation*}
e = \mathrm{NCF}(e)e^\mathrm{NC} + \mathrm{CF}(e)e^\mathrm{SC},
\end{equation*}
where $\mathrm{NCF}(e) \in [0,1]$, $\mathrm{CF}(e) = 1 - \mathrm{NCF}(e)$,
the average probability of failure, $\bar{p}_F = 1 - \bar{p}_S$, satisfies the relation,
\begin{equation*}
\bar{p}_F \geq \mathrm{NFC}(e)\nu(o).
\end{equation*}
A violation of this inequality yields a proof of non-locality and thus generalises the result achieved in \cite{AbramskyBarbosaMansfield2017} where only linear functions were considered. The analogous derivation holds under the appropriate change in Eq.~(\ref{eq: minimal function distance}).

Note that the success probability depends on the non-contextual fraction $\mathrm{NFC}(e)$, which measures the amount of non-contextuality inherent in the probabilistic case. On the other hand it also depends on $\nu(o)$, the minimal distance of the output function to a function realisable in the non-contextual case. We could thus achieve a better success probability if we restricted to functions that all have a certain minimal distance from the set of possible output functions. However, as the space of output functions under non-contextuality, $\Omega^{\mathbb{Z}_d}_n(p-1)$, is non-linear, we cannot use bent functions as in the qubit case.
\end{document}